\documentclass{article}

\usepackage{arxiv}

\usepackage[utf8]{inputenc} 
\usepackage[T1]{fontenc}    
\usepackage{hyperref}       
\usepackage{url}            
\usepackage{booktabs}       
\usepackage{amsfonts}       
\usepackage{nicefrac}       
\usepackage{microtype}      
\usepackage{lipsum}

\usepackage{cite}
\usepackage{amsmath,amssymb,amsfonts}
\usepackage{graphicx}
\usepackage{algorithm}
\usepackage[noend]{algpseudocode}
\usepackage{hyperref}
\usepackage{textcomp}
\usepackage{threeparttablex} 
\usepackage{adjustbox}
\usepackage{mathtools}
\usepackage{caption}
\usepackage{float}
\usepackage{stmaryrd}
\usepackage{epstopdf}
\usepackage{multirow}
\usepackage{pifont}
\usepackage{caption}
\usepackage{subcaption}
\usepackage{xcolor}

\newcommand{\norm}[1]{\left\lVert#1\right\rVert}

\newcommand{\R}{{\mathbb{R}}}

\newcommand{\B}{{\mathcal B}}

\newcommand{\Sy}{{\mathcal{S}}}
\newcommand{\N}{{\mathbb{N}}}

\newcommand{\e}{\mathsf{e}}
\newcommand{\ie}{{\it i.e.}}

\newcommand{\X}{{\mathbf{X}}}

\newcommand{\T}{{\mathbf{T}}}
\newcommand{\So}{{\mathbf{S}}}
\newcommand{\Obs}{{\mathcal{U}}}

\newcommand{\U}{{\mathbf{U}}}

\newcommand{\cen}{{\mathsf{\textbf{c}}}}
\newcommand{\rad}{{\mathsf{\textbf{r}}}}
\newcommand{\ex}{\mathsf{e}}
\newcommand{\cmark}{\ding{51}}%
\newcommand{\xmark}{\ding{55}}%

\newtheorem{theorem}{Theorem}[section]

\newtheorem{assumption}{Assumption}

\newtheorem{proposition}[theorem]{Proposition}
\newtheorem{definition}[theorem]{Definition}
\newtheorem{lemma}[theorem]{Lemma}
\newtheorem{remark}[theorem]{Remark}
\newtheorem{problem}[theorem]{Problem}
\newenvironment{proof}{\par\noindent\textbf{Proof.} }{\hfill$\blacksquare$\par}

\title{Input-Constrained Spatiotemporal Tubes for Safe Navigation of Unknown Euler–Lagrange Systems in Dynamic Environments

}

\author{
 Siddhartha Upadhyay \\
  Department of Cyber-Physical Systems\\
  Indian Institute of Science, Bengaluru, India\\
  \texttt{siddharthau@iisc.ac.in} \\
   \And
 Ratnangshu Das \\
  Department of Cyber-Physical Systems\\
  Indian Institute of Science,, Bengaluru, India\\
  \texttt{ratnangshud@iisc.ac.in} \\
   \And
 Pushpak Jagtap \\
Department of Cyber-Physical Systems\\
  Indian Institute of Science,, Bengaluru, India\\
  \texttt{pushpak@iisc.ac.in} \\
}

\begin{document}
\maketitle

\begin{abstract}
Safe navigation in dynamic environments is challenging when system dynamics are unknown and actuator inputs are limited. Existing methods either rely on accurate models, require online optimization, or do not explicitly account for input constraints. This paper presents a real-time control framework for unknown Euler–Lagrange systems that guarantees finite-time reach-avoid-stay (FT-RAS) specifications while respecting actuator limits. We extend the spatiotemporal tube (STT) framework by incorporating input constraints into the controller design and derive offline-verifiable feasibility conditions that relate the available control authority to the tube design and uncertainty bounds. The resulting framework is approximation-free and computationally efficient, making it suitable for real-time implementation. The proposed approach is validated through simulations on a mobile robot, a quadrotor, and a spacecraft, together with hardware experiments on a mobile robot, demonstrating safe navigation while satisfying actuator constraints.
\end{abstract}

\keywords{Input Constraint, Safety Guarantees, Spatiotemporal Tube, Unknown Euler-Lagrange}

\section{Introduction}
Autonomous systems are increasingly being deployed in safety-critical applications \cite{Safety_critical} such as autonomous driving, aerial robotics, industrial automation, and medical robotics. The deployment of autonomous robotic systems in real-world environments presents several challenges \cite{wijayathunga2023challenges},\cite{yin2024formal}. In practice, system dynamics are often partially known or unknown, the operating environment is dynamic and uncertain, and external disturbances can significantly affect system performance. Moreover, all robotic platforms are subject to actuator input constraints arising from physical limitations on force, torque, velocity, and acceleration \cite{simmons1994structured}. These constraints are particularly important in practice, as neglecting them may lead to infeasible control inputs, actuator saturation, degraded performance, and even safety violations. Therefore, developing real-time control frameworks that explicitly account for input constraints while providing formal safety guarantees remain a challenging problem.

The problem of navigation in dynamic environments has been extensively studied in the literature, with planning-based approaches emerging as one of the most prominent solution paradigms. Classical planning approaches, such as A* and RRT \cite{RRTs}, and reactive methods such as Artificial Potential Fields (APF) \cite{APF}, have been widely adopted for autonomous navigation. However, these methods primarily focus on generating collision-free paths without explicitly considering the system dynamics. Consequently, the resulting trajectories may not be dynamically feasible or trackable under the available actuator input constraints. Moreover, APF-based methods are susceptible to local minima, and both approaches generally lack formal guarantees on safety and task completion while providing limited support for explicitly handling actuator input constraints.
Model-based optimization approaches like MPC \cite{MPC,NMPC} have also been proposed to bridge planning and control. For instance, an event-triggered MPC \cite{yang2023empc} integrated with an adaptive APF achieves simultaneous trajectory tracking and obstacle avoidance while explicitly accounting for input and state constraints. However, such approaches rely on accurate system models and online optimization,  although they ensure recursive feasibility and closed-loop stability, they do not provide formal task-level guarantees for reach-avoid objectives. 

Control Barrier Functions (CBFs) \cite{CBF}, being model-based, have been widely used for safety-critical control. However, the basic implementation of CBFs does not guarantee safety in the presence of actuator constraints due to possible optimization infeasibility \cite{ICCBF}. In \cite{ICCBF},  authors present a framework for designing CBFs under input constraints. But apart from being susceptible to modeling errors, extending this framework to dynamic unsafe sets is nontrivial, as the simultaneous satisfaction of safety and input constraints depends on the choice of the class-$\mathcal{K}$ function and iterative steps of defining the safe sets and would require forward simulation of the systems. 
To address moving obstacles, \cite{barrier} employs a Barrier Lyapunov Function but does not explicitly consider actuator input constraints. Similarly, \cite{tayal2026collision} proposes a Collision Cone Control Barrier Function (C3BF) that incorporates relative obstacle motion into the barrier formulation. However, the approach assumes constant obstacle velocities and does not account for actuator limitations, implicitly requiring sufficient control authority to guarantee safety.
Similarly, \cite{Neural_CBF} presents a learning-based approach for synthesizing CBFs for general nonlinear systems under input constraints. However, the requirement of offline training limits its applicability to real-time deployment in dynamically changing environments. In \cite{EL_CBF}, the authors propose a correct-by-construction CBF design for Euler--Lagrange (EL) systems and derive a feasibility condition that can be enforced by design to guarantee safety. Nevertheless, the approach relies on an accurate system model, is susceptible to modeling errors, and incurs significant computational complexity. 

Symbolic control \cite{tabuada2009verification} is another class of correct-by-construction controller synthesis methods capable of handling known general nonlinear dynamics along with input constraints. However, it suffers from the curse of dimensionality, with the controller synthesis time growing exponentially with the system's state dimension. There have been several efforts to address this limitation through compositionality \cite{saoud2019compositional,saoud2021compositional}, through parallelization \cite{khaled2021omegathreads}, using CBFs \cite{sundarsingh2023scalable}, and approximation-free approaches \cite{VCZ}. Nevertheless, these approaches remain largely limited to offline controller synthesis in static environments.

To address unknown system dynamics, the Spatiotemporal Tube (STT) based framework \cite{das2024prescribed,das2025spatiotemporal} has been employed to synthesize approximation-free, closed-form controllers for solving the Temporal Reach-Avoid-Stay (T-RAS). In \cite{STT_Real}, the authors extended the approach to accomplish the T-RAS task in the presence of dynamic unsafe sets. However, these frameworks do not explicitly account for actuator input constraints, and the resulting control inputs may violate these constraints in challenging scenarios, such as avoiding fast-moving obstacles, performing late evasive maneuvers, or navigating around obstacles near the target. Furthermore, the prescribed-time reaching requirement can further increase the control effort, making input constraint violations more likely.

In summary, existing methods either rely on accurate models and online optimization (MPC, CBF-QP), depend on offline synthesis unsuited to dynamic environments (symbolic control, learned CBFs), or neglect actuation limits altogether (existing STT frameworks). To the best of our knowledge, no existing approach provides formal reach-avoid-stay guarantees for unknown systems in dynamic environments while respecting input constraints. The key contributions of this paper are as follows:
\begin{enumerate}
\item We propose a real-time STT framework for unknown Euler--Lagrange (EL) systems that explicitly accounts for actuator input constraints. Unlike \cite{STT_Real}, the proposed tube dynamics are designed to ensure that the resulting tubes are compatible with the available control authority.

\item We develop an approximation-free, closed-form control law based on bounded transformation functions. The controller always satisfies the actuator input limits and guarantees that the system state remains inside the synthesized spatiotemporal tube.

\item We derive feasibility conditions that relate the available actuator limits, system uncertainty bounds, and tube design parameters. These conditions provide a systematic way to design trackable spatiotemporal tubes and certify that the proposed controller can enforce the desired finite-time reach-avoid-stay specification.

\item We prove that, whenever the feasibility conditions are satisfied, the synthesized tube is forward invariant and the closed-loop system satisfies the assigned finite-time reach-avoid-stay specification.

\item We validate the proposed framework through simulations on mobile robots, quadrotor, and spacecraft systems, together with real-world experiments. The results demonstrate safe real-time navigation while respecting actuator input constraints.
\end{enumerate}

\section{Preliminaries and Problem Formulation}
\subsection{Notations}
For $a,b\in\N$ with $a\leq b$, we denote the closed interval in $\N$ as $[a;b] := \{a, a+1, \ldots, b\}$. A ball centered at $\cen \in \mathbb{R}^n$ with radius $\rad \in \mathbb{R}^+$ is defined as $\mathcal{B}(\cen, \rad) := \{ x \in \mathbb{R}^n \mid \|x - \cen\| \leq \rad \}$. We use $x\circ y$ to represent the element-wise multiplication where $x,y\in \R^n$.  An identity matrix of order $n\in \N$ is represented using $I_n$. $x\uparrow(\downarrow)a$ indicates $x$ approaches $a$ from the left (right) side. For a vector $x = [x_1, \dots, x_n]^\top \in \mathbb{R}^n$, the $\ell_p$-norm is denoted by $\|x\|_p = \left( \sum_{i=1}^{n} |x_i|^p \right)^{1/p}$, for $p \geq 1$, and the Euclidean norm is denoted by $\|x\|$ (or equivalently $\|x\|_2$), defined as $\|x\| = \left( \sum_{i=1}^{n} |x_i|^2 \right)^{1/2}$. The interior of a set $A\subset\R^n$ is denoted by $int(A)$. For a vector $x=[x_1,\ldots,x_n]^\top\in\mathbb{R}^n$, $diag(x)$ denotes the diagonal matrix with diagonal entries $x_1,\ldots,x_n$. For $\mathbf{a},\mathbf{b}\in\mathbb{R}^n$, $\mathbf{a}\preceq\mathbf{b}$ (respectively, $\mathbf{a}\succeq\mathbf{b}$) denotes $a_i\le b_i$ (respectively, $a_i\ge b_i$) for all $i=1,\ldots,n$. All other notation in this paper follows standard mathematical conventions.
\subsection{System Definition}\label{subsec:sys}
Consider an Euler-Lagrange (EL) system $\Sy$ described by:
\begin{align}\label{eqn:EL}
    \Sy:M(x)\ddot x+V(x,\dot x)+G(x)=\tau+d(t),
\end{align}
where $x(t)=[x_1(t),....,x_n(t)]\in\X\subset\R^n$ is the system configuration, $\tau(t)\in\R^n$ is the control input, and $d(t)\in \mathbb{D}\subset\R^n$. The term $M(x)\in \R^{n\times n}$, $V(x,\dot x)\in \R^n$, and $G(x)\in \R^n$ denote the inertial, coriolis/centrifugal, and gravity components, respectively. Note that for the sake of simplicity, we drop the arguments of term $M(x),V(x,\dot x)\text{ and }G(x)$ and instead write it as $M,V\text{ and }G$.

For the EL system $\Sy$ described in \eqref{eqn:EL}, we consider $
\bar \tau\in \R^n $ to be the bound on control input i.e. $|\tau(t)|\leq\bar \tau,\forall t \in \R_0^+$. Although the disturbance $d$ and the system parameters $M,V$, and $G$ are unknown, their boundedness can be used for control design. Accordingly, we make the following assumptions \cite[Chapter 2]{ELbook}, \cite[Chapter 7]{ELbook2}, \cite{ELbounds}:
\begin{assumption}\label{ass:el1}
    The external disturbance d satisfies $-\bar d\preceq d(t)\preceq \bar d, \forall t \in \R_0^+$ where $\bar d\in \R^n$ is a known bound and under the scaling of $M^{-1}$, it satisfies $-\underline{m}_i\bar d\preceq M^{-1}d\preceq  \underline{m}_i \bar d$ for some $\underline{m}_i\in \R^+$.
\end{assumption}
\begin{assumption}\label{ass:el2}
    Given the control bound $\bar \tau,$ there exists a positive constant $\underline{m}\in \R,$ such that $\underline{m}\bar \tau \preceq M(t)^{-1}\bar \tau$
\end{assumption}
\begin{assumption}\label{ass:el3}
    The Coriolis and centrifugal terms $V$ and the gravity vector $G$ satisfy $\underline{V}_M\preceq V_M(x,\dot x)\preceq \overline{V}_M$, where $V_M:=-M^{-1}(V+G)$ are bounded as $\underline{V}_M\preceq V_M\preceq\overline{V}_M$ with $\underline{V}_M,\overline{V}_M\in \R^n$, and $V_M^{max}:=\max(-\underline{V}_M, \overline{V}_M)$.
\end{assumption}

\subsection{Problem Formulation}
Let the state of the system $x(t)$ be subjected to a \textit{finite-time reach-avoid-stay (FT-RAS)} specification defined next:
\begin{definition}[{Finite-Time }Reach-Avoid-Stay (FT-RAS) task]\label{def: tras}
Given the system configuration $x(t)$, a time-varying unsafe set $\U:\R_0^+\rightarrow\R^n$, an initial set $\mathbf{S}\subset\X\setminus\U(0)$ with the given initial condition $x(0)\in \mathbf{S}$, and a target set $\T$ , we say that the state $x(t)$ satisfies the FT-RAS task if:
\begin{align}
      \exists t_c\text{ s.t. }x(t)\in \T,\forall t\in[t_c,\infty),\text{ and } \quad x(t)\notin\U(t), \forall t\in \R_0^+\notag
\end{align}
\end{definition}
\begin{definition}\label{def: obs}

    The unsafe set $\U(t)$ is defined as a union of $n_o$ time-varying balls, $\mathcal{U}^{(j)}(t),\forall j \in [1;n_o]$ defined as follows:
    \begin{align}
        \U(t)=\bigcup_{j=1}^{n_o}\Obs^{(j)}(t)\text{ with }\mathcal{U}^{(j)}(t):=\B(o^{(j)}(t),\rad_o^{(j)}),\notag
    \end{align}
    where $o^{(j)}(t)\in \R^n$ and $\rad^{(j)}_o\in\R^+$ denote the time-varying center and radius capturing the region surrounding the $j^{th}$ dynamic obstacle. We upper bound the rate at which the center of the time-varying obstacle changes, by assuming that its velocity is bounded, i.e.,
\begin{align}
    \norm{\dot{o}^{(j)}(t)} \leq v_o, \quad v_o \in \mathbb{R}^+_0.\notag
\end{align}

\end{definition}
We now state the main control problem addressed in this work:
\begin{problem}[Real-Time Input-Constrained FT-RAS]\label{prob1}
Given the system \eqref{eqn:EL} under Assumptions \ref{ass:el1}--\ref{ass:el3}, subject to prescribed input constraints, and an FT-RAS task as defined in Definition \ref{def: tras}, synthesize a \textit{real-time}, \textit{approximation-free}, and \textit{closed-form} control law that guarantees satisfaction of the FT-RAS specification while respecting the \textit{input constraints} at all times.
\end{problem}

To solve Problem \ref{prob1}, we utilize the STT framework, which defines a safe time-varying region in state space that remains goal-directed.
\begin{definition}\label{def:stt}
    Given a FT-RAS task in Definition \ref{def: tras}, a spatiotemporal tube (STT) $\Gamma(t) = \mathcal{B} (\cen(t), \rad(t))$ is characterized by a time-varying center $\cen: \R_0^+ \rightarrow \mathbb{R}^n$ and radius $\rad: \R_0^+ \rightarrow \mathbb{R}^+$, if the following holds
    \begin{subequations}\label{eqn:stt}
    \begin{align}
        &\rad(t) > 0, \forall t \in \R_0^+,\\
        &\exists t_c\in \R^+ \text{ such that } \Gamma(0) \subseteq \So,\ \Gamma(t) \subseteq \T,\forall t\in[t_c,\infty),\\
        &\Gamma(t) \cap \U(t) = \emptyset, \forall t \in \R_0^+.
    \end{align}
    \end{subequations}
\end{definition}

\begin{remark}
If one designs a control law that constrains the state of the system $x(t)$ within the STT, i.e.,
\begin{align} \label{eqn:stt_con}
    x(t) \in \Gamma(t), \forall t \in \R_0^+,
\end{align}
while respecting the input constraint, then one can ensure satisfaction of the FT-RAS specification for the given EL system in \eqref{eqn:EL}.
\end{remark}

\section{Designing Spatiotemporal Tubes}
In this section, we propose the design of a spatiotemporal tube which would satisfy the FT-RAS task.
\subsection{Preliminaries of STT Design}
We begin by selecting a point $s=[s_1,\ldots,s_n] \in int(\mathbf{S})$ and $\eta=[\eta_1,\ldots,\eta_n] \in int(\T)$ inside the initial set and the target set. Next, we define balls of radius $d_S,d_T\in \R^+$ around the points $s\text{ and } \eta$ and denote it by $\mathbf{\hat S}$ and $\hat\T$, as follows:
\begin{align}
    \mathbf{\hat S}&=\B(s,d_S):=\{x\in \R^n|\norm{x-s}\leq d_S\}\notag\\
    \mathbf{\hat \T}&=\B(\eta,d_T):=\{x\in \R^n|\norm{x-\eta}\leq d_T\}\notag
\end{align}
such that $x(0) \in \mathbf{\hat S}\subset \mathbf{S}$ and $\hat\T\subset \T$. As introduced in Definition~\ref{def:stt}, the STT $\Gamma(t) = \mathcal{B}(\cen(t), \rad(t))$ is defined by a time-varying center $\cen: \R_0^+ \rightarrow \R^n$ and radius $\rad: \R_0^+ \rightarrow \R^+$.
Additionally, to ensure separation between the unsafe set, we make the following assumptions:
\begin{assumption}\label{ass:obs}
   We assume that the unsafe set $\mathcal{U}^{(j)}(t),\forall j \in [1;n_o]$ remains separated from each other by at least a distance $2\rad_{a}$ at all times, where $\rad_a\in \R^+$ and can be considered as an acting radius. 
\end{assumption}
\begin{remark}
The requirement in Assumption~\ref{ass:obs} is imposed primarily for analysis and does not limit the practical applicability of the framework. Whenever two unsafe sets become sufficiently close such that the assumption no longer holds, they may be merged into an equivalent unsafe set that minimally encloses both regions. This approximation would preserve the safety guarantees while ensuring that the assumption remains satisfied.
\end{remark}
\subsection{STT Center Dynamics}
The evolution of the center $\cen(t)$ with, $\cen(0)=s \in int(\So)$ is, governed by the following dynamics:
\begin{align}\label{eqn:cen_dyn}
\dot{\cen}(t) =
\begin{cases}
k_1 \gamma(\cen(t),\eta), &\text{if } d^{(j)}(\cen(t),o^{(j)}(t)) > \rad_{a},\ \forall j \in [1;n_o], \\
k_1 \frac{d^{(j)}(\cen(t),o^{(j)}(t))}{\rad_{a}} \gamma(\cen(t),\eta) 
+ \notag \\
\qquad \left(1 - \frac{d^{(j)}(\cen(t),o^{(j)}(t))}{\rad_{a}}\right)
\left(k_2 m^{(j)} + k_3 v^{(j)}\right), 
& \text{if } {\exists j \in [1;n_o]\text{ }} d^{(j)}(\cen(t),o^{(j)}(t)) \le \rad_{a}.
\end{cases}
\end{align}
The dynamics of the center in \eqref{eqn:cen_dyn} is divided into two cases: 
\begin{enumerate}
    \item[(i)] when no unsafe sets are within the acting radius $\rad_{a} \in \mathbb{R}^+$, and
    \item[(ii)] when an unsafe set is present within the acting region.
\end{enumerate}
In the first case, the unsafe set lies outside the acting radius, i.e.,
\begin{align}
d^{(j)}(\cen(t), o^{(j)}(t)) = \norm{\cen(t) - o^{(j)}(t)} - \rad^{(j)}_o - \rad_{min} > \rad_{a},
\end{align}
where $\rad_{min}$ denotes the minimum allowable distance between the tube center 
and the unsafe set. In this scenario, the vector  
$\gamma(\cen(t), \eta)$ drives the tube center toward the target point $\eta$ and is defined as follows:
\begin{align}\label{eqn:gamma}
    \gamma(\cen(t),\eta)&=-\frac{(\cen(t)-\eta)^{\frac{1}{3}}}{\sqrt{\norm{(\cen(t)-\eta)^{\frac{1}{3}}}^2+\epsilon^2}},\quad \epsilon\in \R^+
\end{align}
and $k_1\in \R^+$ is the constant that controls the rate at which the center gets pulled toward the target.

The second case is activated whenever an unsafe set enters the acting radius, $\ie$, $d^{(j)}(\cen(t),o^{(j)}(t))=\norm{\cen(t)-o^{(j)}(t)}-\rad^{(j)}_o-\rad_{min}\leq \rad_{a}$. By Assumption~\ref{ass:obs}, at most one unsafe set can lie within the acting radius at any time. In this case, the obstacle avoidance terms $m^{(j)}$ and $v^{(j)}$  associated with the $j$-th unsafe set are activated to ensure that the tube center remains at least $\rad_{\min}\in\mathbb{R}^{+}$ away from the unsafe set $\mathcal{U}^{(j)}$, while the goal-directed term $\gamma(\cen(t),\eta)$ simultaneously drives the tube toward the target. The obstacle avoidance terms $m^{(j)}$ and $v^{(j)}_1$ are defined as:
\begin{align}
    m^{(j)}=\frac{\cen(t)-o^{(j)}(t)}{\norm{\cen(t)-o^{(j)}(t)}}\notag
\end{align}
and $v^{(j)}$ is any unit vector that meets the condition ${m^{(j)}}^\top v^{(j)}=0$, while $k_2\geq v_o,k_3\in \R^+$ are the positive constants that affect the avoidance term.

In addition, to ensure a safe approach to the target, we make the following separation assumptions:
\begin{assumption}\label{ass:reach}
    To guarantee safe convergence to the target, we assume that there exists a time $t_1 \in \mathbb{R}^+$ such that the unsafe set remains at least a distance $\rad_a$ from the STT center $\cen(t)$ for all $t \geq t_1$.
\end{assumption}
\begin{remark}
    Assumption~\ref{ass:reach} is required to guarantee safe target reaching. We only assume that there exists a finite time $t_1$ after which a safe passage toward the target becomes available.

    Importantly, the proposed framework guarantees safety throughout the entire interval $[0,t_1]$ by continuously avoiding the unsafe sets. Once the separation condition in Assumption~\ref{ass:reach} is satisfied after any time $t_1$, the target-driven term dominates the tube dynamics, thereby ensuring finite-time convergence to the target set. If Assumption~\ref{ass:reach} does not hold for an extended period, the proposed framework continues to guarantee safety by avoiding the unsafe sets. Therefore, Assumption~\ref{ass:reach} is only required to ensure that eventual target reachability is feasible.

\end{remark}
\subsection{STT Radius Dynamics}
Next, we define the dynamics of the tube's radius $\rad(t)$, which adapts according to the tube's proximity to unsafe sets, as follows:
\begin{equation}\label{eqn:radius_dynamics}
\dot{\rad}(t)=\frac{\ex^{-\nu d_1^{(j)}{(t)}}\dot{ d}^{(j)}_1(t)}{\big ( \ex^{-\nu\rad_{max}}+\ex^{-\nu d^{(j)}_1(t)}\big)},\quad {\rad(0)=\rad_{max}},
\end{equation}
where $\rad_{max}\leq\min(d_S,d_T,\rad_{a})$ is the maximum allowable tube radius, $\nu \in \R^+$ is an arbitrarily large smoothing parameter, and $d^{(j)}_1(t)$ denotes the distance to the unsafe sets:
\begin{align}\label{eqn:rho}
     d^{(j)}_1(t) &=\| \cen(t) - o^{(j)}(t) \| - \rad_o^{(j)}, \forall j\in [1;n_o],\notag
\end{align}
where $d^{(j)}_1(t)$ is the distance between the tube center $\cen(t)$ and the $j^{th}$ the unsafe set.  Thus, according to Equation \eqref{eqn:radius_dynamics}, the radius of the tube changes from $\rad_{max}$ when the tube is close to any unsafe set, as shown in Figure~\ref{fig:STT_Explain}, the radius shrinks to avoid the collision with the unsafe set and expands when it is farther away from the unsafe sets. 

Thus, given a time-varying center $\cen: \R_0^+ \rightarrow \R^n$ and radius $\rad: \R_0^+ \rightarrow \R^n$, governed by the dynamics in Equations \eqref{eqn:cen_dyn} and \eqref{eqn:radius_dynamics}, we define the STT $\Gamma(t) = \mathcal{B} (\cen(t), \rad(t))$ as a closed ball in $\R^n$ centered at $\cen(t)$ with radius $\rad(t)$:
\begin{equation}\label{eqn:stt_ball}
    \Gamma(t)\! :=\! \{ x\!\in \!\R^n \!\mid \! \|x\!-\!\cen(t)\| \!\leq\! \rad(t)\}, \forall t \!\in \!\R_0^+ .
\end{equation}

\begin{figure}
    \centering
    \includegraphics[width=0.5\linewidth]{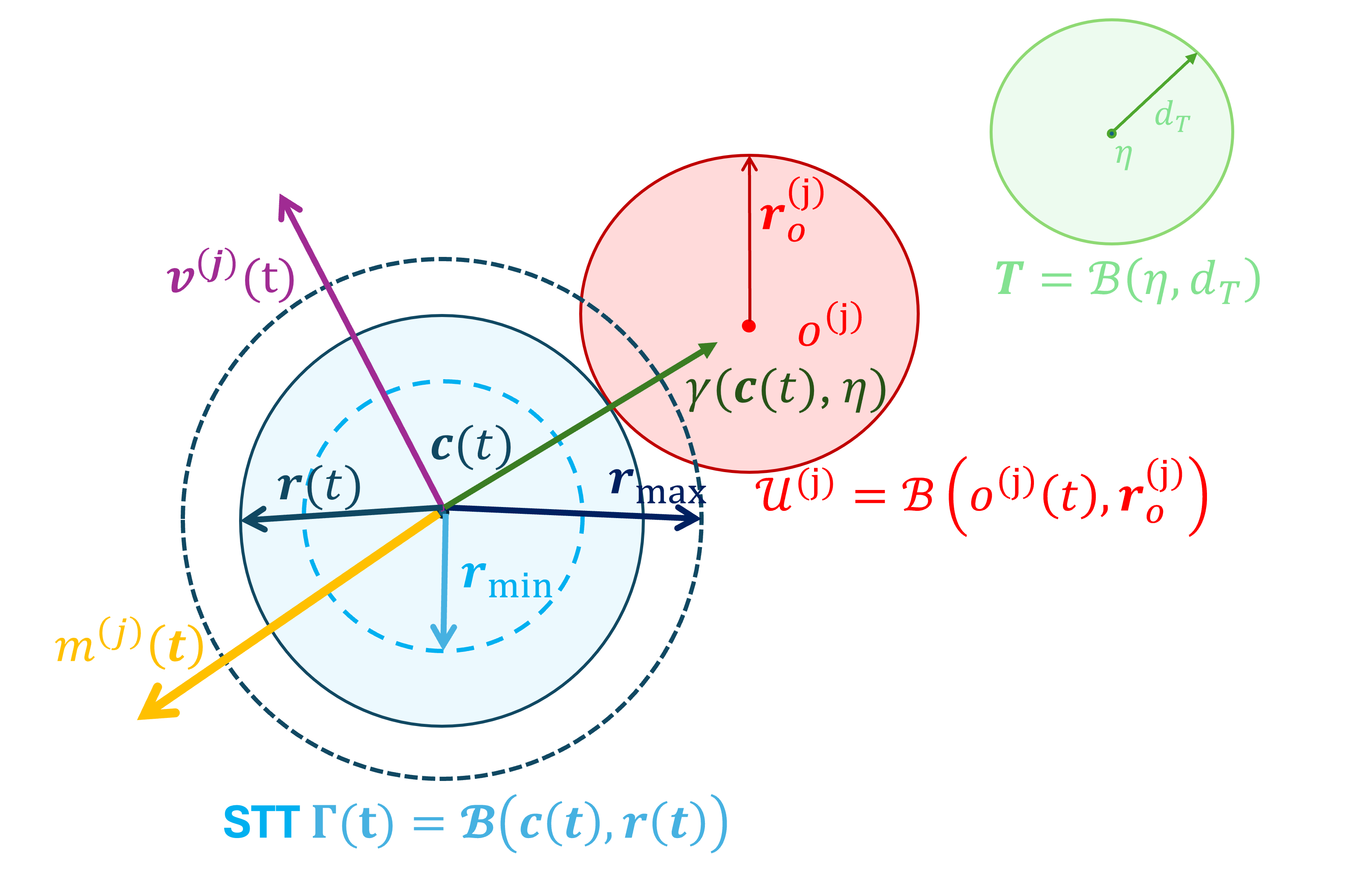}
    \caption{Pictorial representation of different terms affecting the STT center dynamics and radius dynamics.}
    \label{fig:STT_Explain}
\end{figure}

\subsection{Theoretical Guarantee of FT-RAS Satisfaction}
The next theorem guarantees that the designed STT adheres to the following three key conditions for satisfying the FT-RAS specifications. First, the tube reaches the target set within some finite time $t_c\in \R^+$ and remains in it for all further time. Second, the tube remains entirely outside the unsafe set for all times $t \in \R_0^+$, ensuring safety. Finally, the radius of the tube remains strictly positive throughout the motion.
\begin{theorem}\label{theorem_tube}
The STT $\Gamma(t)$ in \eqref{eqn:stt_ball} meets the following requirement to ensure satisfaction of the FT-RAS specification:
\begin{enumerate}
    \item[(i)] Starts from the initial set $\ie$ $\Gamma(0) \subseteq \So$ and reaches the target set within finite time $t_c\in \R^+$ and stays there for all further times: $\Gamma(t) \subseteq \T,\forall t \in [t_c,\infty)$.
    \item[(ii)] Avoids the unsafe set: 
    $\Gamma(t) \cap \U(t) = \emptyset$,  $\forall t \in [0,\infty)$.
    \item[(iii)] The radius stays positive: $\rad(t) \in \mathbb{R}^+, $ $\forall t \in [0,\infty)$.
\end{enumerate}
\begin{proof}
    We will prove each of the claims individually:

    $(i)$ At t=0, the tube center starts from the point $s$, $\cen(0)=s$.
    Also, we can rewrite the solution for the radius dynamics $\rad(t)$ as follows:
    \begin{align}
  \rad(t)=-\frac{1}{\nu}\ln(\e^{-\nu\rad_{max}}+e^{-\nu d^{(j)}_1(t)}), \label{eqn:rad_sol}
\end{align}
which is a smooth approximation of the min operator, the radius of the tube satisfies the inequality:
\begin{align}\rad(t)\leq&\min(\rad_{max},d^{(j)}_1(t))\notag\\
    \implies \rad(t)\leq& \rad_{max}\leq \min(d_S,d_T,\rad_{a}),\forall t\in \R^+_0.\label{eqn:rad_max}
\end{align}
Thus, from \eqref{eqn:rad_max} the STT at $t=0$ can be expressed as $\Gamma(0)=\B(s,\rad(0))\subset \So$.

Next, in order to show that the STT reaches the target set $\T$ and stays there, we use Assumption \ref{ass:reach}, which confirms the existence of $t_1\in \R^+$ such that for all $t>t_1$, the unsafe sets remain a minimum $\rad_{a}$ distance away from the center of the tube $\cen(t)$, implying $d^{(j)}(\cen(t),o_p^{(j)})>\rad_{a},\forall j \in [1;n_o]$. Thus, for all $t>t_1$, Equation \eqref{eqn:cen_dyn} simplifies to:
\begin{align}
    \dot c=-k_1\frac{(\cen(t)-\eta)^{\frac{1}{3}}}{\sqrt{\norm{(\cen(t)-\eta)^{\frac{1}{3}}}^2+\epsilon^2}}.\notag
\end{align}
We substitute $z=(\cen(t)-\eta)$ and rewrite the above equation as $ \dot z=-k_1\frac{z^{\frac{1}{3}}}{\sqrt{\norm{z^{\frac{1}{3}}}^2+\epsilon^2}}$. Next, we consider a positive definite and radially unbounded Lyapunov function as $V=\frac{1}{2}z^\top z=\frac{1}{2}\|z\|^2,$ and then we compute $\dot V$:
\begin{align}
    \dot V=&\frac{-k_1 z^\top z^{\frac{1}{3}}}{\sqrt{\|z^{\frac{1}{3}}\|^2+\epsilon^2}}.\notag
\end{align}
Next, using norm equalities and inequalities 
$z^\top z^{\frac{1}{3}}=\sum_{i=1}^{i=n}z_i^{\frac{4}{3}}=\|z\|^{\frac{4}{3}}_{\frac{4}{3}}\geq c_1\|z\|^\frac{4}{3}$ and $
    \|z^\frac{1}{3}\|^2= \sum_{i=1}^n z^{\frac{2}{3}}=\|z\|^\frac{2}{3}_{\frac{2}{3}}\leq c_2\|z\|^\frac{2}{3}$, we get
\begin{align}
     \dot V=&-\frac{k_1 \|z\|^{\frac{4}{3}}_{\frac{4}{3}}}{\sqrt{\|z^{\frac{1}{3}}\|^2+\epsilon^2}}\leq-\frac{k_1c_1\|z\|^\frac{4}{3}}{\sqrt{c_2\|z\|^\frac{2}{3}+\epsilon^2}}.\label{eqn:lyapunov_eq}
\end{align}
Next, we substitute $2V=\|z\|^2$ in \eqref{eqn:lyapunov_eq} and we get:
\begin{align}
    \dot V\leq-\frac{k_1'V^\frac{2}{3}}{\sqrt{c_1'V^\frac{1}{3}+\epsilon^2}},\notag
\end{align}
where $k_1'=k_1c_12^\frac{2}{3}$ and $c_1'=c_22^\frac{1}{3}$, and since $V$ is strictly decreasing with respect to time and a positive definite function, we can ensure the asymptotic convergence of $z\rightarrow0$ and $V$ decreases with time $\ie$, $V(t)<V(t_1),\forall t \in[t_1,\infty)$ \cite[Theorem 4.9]{khalil2002nonlinear}. We use the strictly decreasing property of $V$ to derive the following inequality:
\begin{align}
    V(t)< V(t_1),\forall t \in[t_1,\infty) \implies
    -\frac{1}{\sqrt{c_1'V(t)^\frac{1}{3}+\epsilon^2}}\leq -\frac{1}{\sqrt{c_1'V(t_1)^\frac{1}{3}+\epsilon^2}}.\label{eqn:Veq2}
\end{align}
Substituting the inequality in \eqref{eqn:Veq2} into \eqref{eqn:lyapunov_eq}, we get:
\begin{align}
    \dot V\leq-cV^\frac{2}{3},\label{eqn:final V}
\end{align}
where $c=\frac{1}{\sqrt{c_1'V(t_1)^\frac{1}{3}+\epsilon^2}}$. Comparing the equation in \eqref{eqn:final V} with $\dot V\leq-cV^\alpha,c\in \R^+ \text{ and }\alpha \in (0,1)$, thus by Theorem 4.2 in \cite{bhat2000finite}, we get the time $t_c$ at which $z(t_c)=0$ $\ie$ $c(t_c)=\eta$ and the time $t_c$ is given by $t_c=\frac{3}{c}V(t_1)^\frac{1}{3}+t_1$. Moreover, since all the unsafe sets are at least $\rad_{a}$ distance away from the center of the tube by Assumption \ref{ass:reach}, we can ensure that $c(t)=\eta, \forall t \in [t_c,\infty)$. For radius of tube, we use the inequality from \eqref{eqn:rad_max} and hence we can conclude that:
\begin{align*}
    \Gamma(t)\subset\T,\forall t \in [t_c,\infty).
\end{align*}

$(ii)$ Next, we will prove the second claim. For this, we consider that the $j^{th}$ unsafe set is inside the acting radius $\ie$  $d^{(j)}(t)\leq\rad_{a}$ and let a time-varying function $J^{(j)}$ for $j^{th}$ unsafe set be defined as follows:
\begin{equation}
    J^{(j)}(t)=(\cen(t)-o^{(j)}(t))^\top(\cen(t)-o^{(j)}(t))-(\rad^{(j)}_o+\rad_{min})^2,\nonumber
\end{equation}
which measures the squared distance between the STT center and the center of the $j^{th}$ unsafe set, offset by the safety margin $\rad^{(j)}_o+\rad_{min}$.
The time derivative of $J^{(j)}(t)$ is:
\begin{align}
    \dot{J}^{(j)}(t)&=2(\cen(t)-o^{(j)}(t))^\top\dot{\cen}(t)-2(\cen(t)-o^{(j)}(t))^\top\dot{o}^{(j)}(t). \notag
\end{align}
We analyze $\dot{J}^{(j)}(t)$ on the boundary of the safe margin around each unsafe set, $\|\cen(t) - o^{(j)}(t)\| = \rad_o^{(j)} + \rad_{min}.$
Substituting the expression for $\dot{\cen}(t)$ into $\dot{J}^{(j)}(t)$ yields:
\begin{align}
    \dot J^{(j)}(t)=&2(\cen(t)-o^{(j)}(t))^\top k_1 \frac{d^{(j)}(\cen(t),o^{(j)}(t))}{\rad_{a}} \gamma(\cen(t),\eta)\notag\\
    &+2(\cen(t)-o^{(j)}(t))^\top  \left(1 - \frac{d^{(j)}(\cen(t),o^{(j)}(t))}{\rad_{a}}\right)
\left(k_2 m^{(j)} + k_3 v^{(j)}\right)-2(\cen(t)-o^{(j)}(t))^\top\dot{o}^{(j)}(t). \notag
\end{align}
As $\| \cen(t) - o^{(j)}(t) \| \rightarrow \rad_o^{(j)} + \rad_{min}$, we have $d^{(j)}=0$. Thus, we obtain the following equation for $\dot J^{(j)}(t)$ as follows:
\begin{align}
    \dot{J}^{(j)}(t)=2k_2(\cen(t)-o^{(j)}(t))^\top \frac{\cen(t)-o^{(j)}(t)}{\norm{\cen(t)-o^{(j)}(t)}}+2k_3(\cen(t)-o^{(j)}(t))^\top v^{(j) }
-2(\cen(t)-o^{(j)}(t))^\top\dot{o}^{(j)}(t) .\notag
\end{align}
We use the fact that $(\cen(t)-o^{(j)}(t))^\top v^{(j) }=0$ by construction of $v^{(j)}$, thereby simplifying the expression for $\dot J^{(j)}(t)$ and establishing the following lower bound:
\begin{align}
    \dot J^{(j)}(t)\geq 2\norm{\cen(t)-o^{(j)}(t)}(k_2-\|\dot o^{(j)}\|).\notag
\end{align}

We use the condition $k_2 \ge v_o$ together with the bound $\|\dot{o}^{(j)}\| \leq v_o$ from Definition~\ref{ass:obs} to get that $\dot{J}^{(j)}(t) > 0$ as $\| \cen(t) - o^{(j)}(t) \| \rightarrow \rad_o^{(j)} + \rad_{min}$, the function $J^{(j)}(t)$ cannot decrease to zero or become negative. Therefore, $J^{(j)}(t) > 0$ holds for all $t \in\R_0^+$, implying 
$$\| \cen(t) - o^{(j)}(t) \| > \rad_o^{(j)} + \rad_{min} \ \text{ for all } \ t \in \R_0^+.$$
Hence, the STT center consistently preserves a minimum separation of $\rad_{min}$ from the $j$-th unsafe set at all times. Extending the same reasoning to all $j \in [1; n_o]$, it follows that the STT center remains at a safe distance from every unsafe set for all time.

Next, we need to guarantee that the tube $\Gamma(t)=\B(\cen(t),\rad(t))$ does not intersect with any unsafe set, it suffices to show that:
\begin{align}\label{eqn:radprove}
    \rad(t)\leq d^{(j)}_1(t)=\norm{\cen(t)-o^{(j)}}-\rad_o^{(j)}.
\end{align}
We use the solution of radius dynamics  from \eqref{eqn:rad_sol} and consider the following two cases:

\textbf{Case 1:} $\rad_{max}\leq d^{(j)}_1(t)$:
\begin{align}
    \rad(t)=&-\frac{1}{\nu}\ln(\e^{-\nu\rad_{max}}+e^{-\nu d^{(j)}_1(t)})\notag\\
    \leq&\min(\rad_{max},d^{(j)}_1(t))
    \leq \rad_{max}.\notag
\end{align}
Therefore, for any $j\in[1;n_o]$, we have
$\rad(t) \leq \rad_{max} \leq d^{(j)}_1(t),$ satisfying condition~\eqref{eqn:radprove}.

\textbf{Case 2:} $\exists \hat j \in [1;n_o], \rad_{max} > d^{(\hat j)}_1(t)$ \ie \  $\hat{j}^{th}$ obstacle is inside the acting radius. Again, using the solution for radius dynamics and using the following inequality, we get:
\begin{align}
     \rad(t) &= -\frac{1}{\nu} \ln \left( \ex^{- \nu \rad_{max}} + \ex^{- \nu  d^{(\hat j)}_1(t)} \right)\notag \\
     &\leq\min \Big( d^{(\hat j)}_1(t),\rad_{max}\Big) \leq d^{(\hat j)}_1(t),\notag
\end{align}
thus ensuring condition in \eqref{eqn:radprove} satisfies.
Therefore, in both scenarios \eqref{eqn:radprove} is satisfied, guaranteeing that the STT $\Gamma(t)$ doesn't intersect any unsafe set at any time: $\Gamma(t)\cap\U(t)=\emptyset, \forall t \in \R_0^+$.

$(iii)$ In the second part of the proof we had already established that for any $j\in[1;n_o]$, the STT center $\cen(t)$ maintains a minimum distance of $\rad_{min}$ from each of the unsafe sets for all times:
\begin{align}
    \|\cen(t)-o^{(j)}(t)\|\geq\rad_o^{(j)}+\rad_{min}\implies d^{(j)}_1(t)\geq \rad_{min}.\notag
\end{align}
Substituting this into radius dynamics from \eqref{eqn:rad_sol}, we get:
\begin{align}
    \rad(t) \geq -\frac{1}{\nu} \ln \left( \ex^{- \nu \rad_{max}} + \ex^{- \nu \rad_{min}} \right) > 0, \quad \forall t \in\R_0^+.\notag
\end{align}
\begin{remark}
   One can select $\nu>\frac{\ln 2}{\rad_{min}}$ which will ensure that $\rad(t)>0$, since $\rad_{max},\rad_{min}>0$.
\end{remark}
\end{proof}
\end{theorem}
\subsection{Bounds on STT Center and Radius Dynamics}
This section presents bounds on the center dynamics \eqref{eqn:cen_dyn} and radius dynamics \eqref{eqn:radius_dynamics}, which are later utilized in deriving feasibility conditions.
\begin{lemma}\label{lem:STT_bound}
    The Euclidean norm of time derivative of STT center $\cen(t)$ and radius $\rad(t)$ are upper bound for all time $t\in \R_0^+$ as
    \begin{align}
        \norm{\dot \cen(t)}\leq \bar \cen,\quad
        |\dot \rad(t)|\leq \bar \rad,\label{eqn:rad_bound}
    \end{align}
    with $\bar \cen=\max\Big(k_1,\sqrt{k_2^2+k_3^2}\Big)$ and $\bar \rad=\max\Big(k_1,\sqrt{k_2^2+k_3^2}\Big)+v_o$.
\end{lemma}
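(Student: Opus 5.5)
We bound the center velocity in each of the two regimes of \eqref{eqn:cen_dyn}, and then obtain the radius bound from the closed-form derivative \eqref{eqn:radius_dynamics} and the chain rule applied to $d_1^{(j)}$.

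\textbf{Center bound.} First we show $\norm{\gamma(\cen,\eta)}\le 1$ for all arguments. This follows directly from \eqref{eqn:gamma}: the numerator has norm $\norm{(\cen-\eta)^{1/3}}$, and the denominator $\sqrt{\norm{(\cen-\eta)^{1/3}}^2+\epsilon^2}$ is at least this norm, with strict inequality since $\epsilon>0$.

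In the first case of \eqref{eqn:cen_dyn} this gives $\norm{\dot\cen}\le k_1$.

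In the second case, set $\lambda := d^{(j)}/\rad_a$. Theorem~\ref{theorem_tube}(ii) gives $d^{(j)}\ge 0$, and the case condition gives $d^{(j)}\le\rad_a$, so $\lambda\in[0,1]$. Since $m^{(j)}$ and $v^{(j)}$ are orthonormal, $\norm{k_2m^{(j)}+k_3v^{(j)}}=\sqrt{k_2^2+k_3^2}$. Then $\dot\cen$ is a convex combination $\lambda a+(1-\lambda)b$ with $\norm{a}\le k_1$ and $\norm{b}=\sqrt{k_2^2+k_3^2}$. The triangle inequality gives $\norm{\dot\cen}\le\lambda k_1+(1-\lambda)\sqrt{k_2^2+k_3^2}\le\max(k_1,\sqrt{k_2^2+k_3^2})=\bar\cen$.

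\textbf{Radius bound.} Let $d_1 := d_1^{(j)}$. Away from $\cen=o^{(j)}$, which is excluded by the safety margin, we have $\dot d_1=\frac{(\cen-o^{(j)})^\top}{\norm{\cen-o^{(j)}}}(\dot\cen-\dot o^{(j)})$. By Cauchy--Schwarz, the center bound, and Definition~\ref{def: obs}, this gives $|\dot d_1|\le\bar\cen+v_o$.

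The weight multiplying $\dot d_1$ in \eqref{eqn:radius_dynamics} is $\frac{\ex^{-\nu d_1}}{\ex^{-\nu\rad_{max}}+\ex^{-\nu d_1}}\in(0,1)$, since both exponentials are positive. Hence $|\dot\rad|\le|\dot d_1|\le\bar\cen+v_o=\bar\rad$, which completes the bound \eqref{eqn:rad_bound}.

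\textbf{Main obstacle.} The delicate point is the switching and the choice of index $j$ in both \eqref{eqn:cen_dyn} and \eqref{eqn:radius_dynamics}. By Assumption~\ref{ass:obs}, at most one obstacle is active at any time, so at each time $\dot\cen$ takes one of the two forms above and the bound holds pointwise wherever the derivative exists. At switching instants the bound is understood in the Carathéodory/almost-everywhere sense. For $\dot\rad$, we interpret $d_1^{(j)}$ as the distance to the currently nearest (active) obstacle. At instants where this index changes, $d_1$ is a minimum of Lipschitz functions, each with rate at most $\bar\cen+v_o$, so it is Lipschitz with the same constant, and the bound again holds almost everywhere.
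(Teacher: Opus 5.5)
Your proposal is correct and follows essentially the same route as the paper. The bound $\norm{\gamma}\le 1$ gives $k_1$ in the first regime, the second regime is treated as a convex combination with $\lambda=d^{(j)}/\rad_a\in[0,1]$, and $|\dot\rad|$ is bounded by the weight in \eqref{eqn:radius_dynamics}, which is below one, times $|\dot d_1^{(j)}|\le\bar\cen+v_o$.

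Your additions tighten the paper's argument rather than change it:
\begin{itemize}
\item You invoke $m^{(j)\top}v^{(j)}=0$, so that $\norm{k_2m^{(j)}+k_3v^{(j)}}=\sqrt{k_2^2+k_3^2}$. The paper cites only unit norms, and those alone would give $k_2+k_3$.
\item You justify $\lambda\ge 0$ via the center separation established in the proof of Theorem~\ref{theorem_tube}(ii).
\item You handle the switching of the active index $j$.
\end{itemize}
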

\begin{proof}
  \textbf{Part 1:}  In the first part, we establish an upper bound on the center dynamics. 
Considering the center dynamics defined in \eqref{eqn:cen_dyn}, we analyze 
the case when $d(\cen, o_p^{(j)}) > \rad_{a}$. By applying norm inequalities 
and using the fact that $\norm{\gamma(\cen(t), \eta)} \leq 1$, which follows 
from the construction of $\gamma$ in \eqref{eqn:gamma}, we obtain the following 
upper bound on the norm of the time derivative of the tube center:
\begin{align}
    \norm{\dot{\cen}(t)} \leq k_1.\label{eqn:bound1}
\end{align}
  Similarly, for the second case when $d(\cen, o_p^{(j)}) \leq \rad_{a}$, 
we begin by applying norm inequalities. Using the fact that
$\norm{\gamma(\cen(t),\eta)} \leq 1$, $\norm{m^{(j)}} \leq 1$, and 
$\norm{v^{(j)}} \leq 1$ for all $j \in [1;n_o]$, and defining 
$\lambda = \frac{d(\cen,o_p^{(j)})}{\rad_{a}} \in [0,1]$, 
the bound on center dynamics can be rewritten as:
  \begin{align}
      \norm{\dot \cen(t)}&\leq\lambda k_1+(1-\lambda)(k_2m^{(j)}+k_3m^{(j)}\notag)\\
      \implies \norm{\dot \cen(t)}&\leq \max\Big(k_1,\sqrt{k_2^2+k_3^2}\Big).\label{eqn:bound2}
  \end{align}
\end{proof}
Thus, from Eqs. \eqref{eqn:bound1} and \eqref{eqn:bound2}, we can conclude that the time derivative of the center of the tube follows the following upper bound for all times $t\in \R_0^+$:
\begin{align*}
     \norm{\dot \cen(t)}\leq \max\Big(k_1,\sqrt{k_2^2+k_3^2}\Big).
\end{align*}
 \textbf{Part 2:} In this part, we will establish an upper bound on the radius dynamics. For this at first, we find $\dot d^{(j)}(t)$ as follows:
 \begin{align*}
      \dot d^{(j)}(t)=\frac{(\cen(t)-o^{(j)})^\top(\dot\cen(t)-\dot o^{(j)}(t))}{\norm{\cen(t)-o^{(j)}}},\notag
 \end{align*}
 then we use the following upper bound:
 \begin{align}
     \frac{\ex^{-\nu d^{(j)}{(t)}}}{\big ( \ex^{-\nu\rad_{a}}+\ex^{-\nu d^{(j)}(t)}\big)}\leq1,\label{eqn:rad_bound1}\\
     |\dot d^{(j)}(t)|\leq\ \norm{\dot\cen(t)}+\norm{\dot o^{(j)}(t)}.\notag
 \end{align}
  Thus, using the upper bound on $\norm{\dot c(t)}$ from the first part and the upper bound on $\norm{\dot o^{(j)}}<v_o,\forall j \in [1;n_o]$ according to Definition \ref{def: obs}, we get the following modified upper bound on the radius dynamics:
  \begin{align}
      |\dot d^{(j)}(t)|\leq \max\Big(k_1,\sqrt{k_2^2+k_3^2}\Big)+v_o.\label{eqn:rad_bound2}
  \end{align}
Next, in order to find the upper bound on $|\dot\rad(t)|$, we first apply the norm inequality, and then we use the bound derived in Equations \eqref{eqn:rad_bound1} and \eqref{eqn:rad_bound2} to get the final upper bound on $|\dot \rad(t)|$ as follows:
\begin{align*}
     |\dot \rad(t)|&\leq \max\Big(k_1,\sqrt{k_2^2+k_3^2}\Big)+v_o.
\end{align*}
\section{Controller Synthesis}
We now derive a closed-form model-free control law to constrain the system state within the STT. It is important to note that, unlike traditional STT formulations \cite{das2024prescribed}, which do not consider the input constraint, this work considers the input constraint as defined in System Definition  \ref{subsec:sys}. Considering input constraint leads to a trade-off between performance and available actuation. In order to balance this trade-off, we will also derive the feasibility conditions.

We will derive the control law following a two-step procedure inspired by a backstepping-like design approach, similar to that described in \cite{PPCfeedback} and \cite{hard_soft}. At the first stage, we design a virtual velocity input to enforce the state of the system inside the STT and at the second stage, we design an acceleration level control to ensure that the virtual velocity input is tracked and thus the state remains inside the STT. Next we rewrite the EL system \eqref{eqn:EL} as follows:
\begin{subequations}\label{eq:EL_simp}
\begin{align}
    \dot{x} &= v, \label{eq:EL_simpa} \\
    \dot{v} &= V_M(x,v) + M(x)^{-1}\tau + M(x)^{-1}d, \label{eq:EL_simpb}
\end{align}
\end{subequations}
where $V_M(x,v)=-M(x)^{-1}(V(x,v)+G(x))$.
\subsection{Control Design}
The steps of the control design are as follows:

\textbf{Stage 1}: Given the STT $\Gamma(t)$ in \eqref{eqn:stt_ball} at first we define the normalized error $e_1(x,t)$ as follows:
\begin{align}
    e_1(x,t)=\frac{\norm{x(t)-\cen(t)}}{\rad(t)}.\notag
\end{align}
The intermediate control input $v_r(x_1,t)$ is given by:
\begin{align}
   v_r(x_1,t)=-\bar v\Psi(\e_1(x_1,t))\hat e_1(x_1,t),\label{eqn:r2}
\end{align}
where $\hat e_1(x_1,t)=\frac{(x(t)-\cen(t))}{\|x(t)-\cen(t)\|}$, $\bar v\in \R^+$ and the map $\Psi$ is the bounded transformation function $\Psi:\R^n\rightarrow\R^n$ is a continuously differentiable map which ensures that the control remains bounded within the control limits while maintaining the desired behavior. A detailed discussion has been given in Appendix \ref{a1:bounded}.

\textbf{Stage 2:} To ensure $v$ tracks the reference signal $v_r$ from Stage 1, at first we define the velocity tracking error as: $e_v(t)=v(t)-v_r(t)$. We enforce that $e_v(t)$ to be always remain bounded within exponentially decaying funnel constraint $\rho_v:\R_0^+\rightarrow \R^n$ given by:
\begin{align}\label{eqn:funnel}
    \rho_v(t)=\ex^{-\mu_vt}(p_v-q_v)+q_v,
\end{align}
where $p_v\in \R^n$ is the initial funnel width, $q_v\in\R^n$ is the steady state limit with $0_{n\times1}<q_v<p_v$ and $\mu_v\in\R^{n \times n}$ is a diagonal matrix that determines the decay rate of the funnel. Next, in order to enforce the funnel constrain \ie $-\rho_v(t)<e_v(t)<\rho_v(t)$, we proposed the following acceleration level control $\tau(t)$ given by:
\begin{align}
    \tau(t)=-diag(\Psi(\varepsilon_v))\bar \tau ,
\end{align}
where $\bar \tau\in \R^n$ is the maximum allowable torque and $\Psi(x)$ is the transformation function (Appendix \ref{a1:bounded}) and $\varepsilon_v(t)$ is the normalized error defined as: $\varepsilon_v=diag(\rho_v)^{-1}e_v(t).$
\begin{lemma}\label{lemm:a_r}
    A uniform bound on the reference acceleration $a_r$ or the upper bound on $\|\dot v_r\|$ is given by:
    \begin{align}
        \| \dot v_r\|\leq a_r,\notag
    \end{align}
    where $a_r=\alpha\frac{\bar v+p_v+\bar \cen}{\rad_{min}}+\beta\frac{\bar \rad}{\rad_{min}}+\theta\frac{\bar v+p_v+\bar \cen}{\rad_{min}}$ with $\alpha,\beta\text{ and }\theta$ can be found according to property of bounded transformation discussed in Propositions \ref{prop:a1}.
    \begin{proof}
        We first compute the time derivative of reference velocity $v_r(t)$ in Equations \eqref{eqn:r2}:
        \begin{align}
            \dot v_r(t) = -\bar v \dot \Psi(e_1(x_1,t))\hat e_1(x_1,t)-\bar v\Psi(e_1(x,t))\dot {\hat e}_1\label{eqn:dot_v_r}
        \end{align}
       In the first part of the proof, we compute the time derivative of $ \dot \Psi(e_1(x_1,t))$ as follows:
        \begin{align}
             \dot \Psi(e_1(x,t))&=\frac{\partial\Psi(e_1(x,t))}{\partial e_1}\dot e_1(x,t)\notag\\
             \dot \Psi(e_1(x,t))&=\frac{\partial\Psi(e_1(x,t))}{\partial e_1}\frac{(x-\cen)^\top(\dot x-\dot \cen)\rad-\norm{x-\cen}^2 \dot \rad}{\rad^2\|x-\cen\|},\notag
        \end{align}
        and then we use the norm inequality to get the following upper bound:
        \begin{align}
            | \dot \Psi(e_1(x,t))|\leq \Big|\frac{\partial\Psi(e_1(x,t))}{\partial e_1}\Big| \frac{\norm{\dot x\|+\|\dot \cen}}{\rad_{min}}+\Big|\frac{\partial\Psi(e_1(x,t))}{\partial e_1}e_1\Big|\frac{|\dot \rad|}{|\rad_{min}|}.\label{eqn: psi_bound}
        \end{align}
        Next, we use $\dot x=e_1+v_r$, along with the property of bounded transformation function $\Psi(e_1(x,t))$ in Proposition \ref{prop:a1} to upper bound $\Big|\frac{\partial\Psi(e_1(x,t))}{d e_1}\Big|$, $\Big|\frac{\partial\Psi(e_1(x,t))}{d e_1}e_1\Big|$ and consider $\alpha, \beta\in \R^+$ be the respective bound and  substitute these in equation \eqref{eqn: psi_bound} to get the final bound for $| \dot \Psi(e_1(x,t))|$ as follows:
        \begin{align}
            | \dot \Psi(e_1(x,t))|\leq\alpha\frac{\bar v+\norm{p_v}+\bar \cen}{\rad_{min}}+\beta\frac{\bar \rad}{\rad_{min}}.\label{eqn:bound_first}
        \end{align}
        In second stage we try to find  the time derivative $\dot {\hat e}_1$ as follows:
        \begin{align}
            \dot {\hat e}_1=\frac{1}{e_1\rad}\Big(\mathbf{I}-\hat e_1 \hat e_1^\top\Big)(\dot x(t)-\dot \cen(t)),\notag
        \end{align}
        which can be upper bounded using norm inequality and using upper bound on $\rad$,$\|\dot x\|$ and $\|\dot \cen\|$ as follows:
        \begin{align}
             \|\dot {\hat e}_1\|\leq\frac{\bar v+\norm{p_v}+\bar \cen}{e_1\rad_{min}}\label{eqn:hat_e}
        \end{align}
        Finally, we substitute the upper bounds obtained from Equations \eqref{eqn:bound_first} and \eqref{eqn:hat_e} into Equation \eqref{eqn:dot_v_r} to obtain the following bound:
        \begin{align}
            \|\dot v_r\|\leq\alpha\frac{\bar v+\norm{p_v}+\bar \cen}{\rad_{min}}+\beta\frac{\bar \rad}{\rad_{min}}+\frac{\Psi(e_1)}{e_1}\frac{\bar v+\norm{p_v}+\bar \cen}{\rad_{min}},\notag
        \end{align}
        we use upper bound on $|\frac{\Psi(e_1)}{e_1}|<\theta$ from Proposition \ref{prop:a1} and get the following upper bound:
        \begin{align}
             \|\dot v_r\|\leq\alpha\frac{\bar v+\norm{p_v}+\bar \cen}{\rad_{min}}+\beta\frac{\bar \rad}{\rad_{min}}+\theta\frac{\bar v+\norm{p_v}+\bar \cen}{\rad_{min}}.\notag
        \end{align}
    \end{proof}
\end{lemma}
\subsection{Feasibility Condition}
Since the proposed framework explicitly accounts for input constraints, we derive a feasibility condition that establishes a relationship among the system uncertainty, the upper bounds on the tube center and radius dynamics, and the available control authority. Satisfaction of this condition guarantees that the system state remains within the spatiotemporal tube $\Gamma(t)$ at all times, thereby ensuring safety. So given the EL system in \eqref{eqn:EL} under Assumptions \ref{ass:el1}, \ref{ass:el2} and \ref{ass:el3}, the STT $\Gamma(t)$, the funnel constraint and the input bound $\bar\tau$, the maximum allowable velocity $\bar v$ and torque $\bar \tau$ should follow the following conditions:
\begin{align}
    \bar  v\geq&\norm{p_v}+\bar \cen+\bar \rad\label{eqn:v_bar_feas}\\
    \bar \tau \geq & \frac{1}{\underline{m}}(V_M^{max}+\underline{m}_i\bar d+ \mu_v(p_v-q_v)+a_r\mathbf{e}_n)\label{eqn:tau_bar_feas},
\end{align}
where $\mathbf{e}_n=[1,\ldots,1]^\top\in\mathbb{R}^n$, $\bar{\cen}$ and $\bar{\rad}$ are the bounds on the STT center and radius dynamics, respectively,  given in Lemma~\ref{lem:STT_bound}, $V_M^{max}$ is the bound given by Assumption \ref{ass:el3}, and $a_r=\alpha\frac{\bar v+\norm{p_v}+\bar \cen}{\rad_{min}}+\beta\frac{\bar \rad}{\rad_{min}}+\theta\frac{\bar v+\norm{p_v}+\bar \cen}{\rad_{min}}$ as given in Lemma~\ref{lemm:a_r}.
\begin{remark}
The feasibility conditions in \eqref{eqn:v_bar_feas} can be verified offline and do not depend on the specifications of the FT-RAS task. Given the input constraints and the bounds provided by Assumptions~\ref{ass:el1}--\ref{ass:el3}, the design parameters can be selected to satisfy the feasibility conditions.
\end{remark}
Next, we present the theorem that formally summarizes the control law proposed in the work. 
\begin{theorem}
    Consider the EL system $\Sy$ in \eqref{eqn:EL} satisfying the Assumptions in \ref{ass:el1}- \ref{ass:el3}, a finite-time reach-avoid-stay (FT-RAS) specification as defined in Definition \ref{def: tras}, and the corresponding STT as derived in \eqref{eqn:stt_ball}.
    If the initial state  lies inside the tube $\ie$ $x(0)\in \Gamma(0)$ and the feasibility conditions in \eqref{eqn:v_bar_feas} is satisfied, then the closed form control laws
    \begin{align}
        v_r(t)=&-\bar v\Psi(\e_1(x_1,t))\hat e_1(x_1,t)\notag\\
         \tau(t)=&-diag(\Psi(\varepsilon_v))\bar \tau \label{eqn:control_law}
    \end{align}
    ensure that the system state $x(t)$ remains within the STT:
    \begin{align}
        x(t)\in \Gamma(t), \forall t\in \R_0^+.\notag
    \end{align}
    thereby satisfying the assigned  TRAS specification while respecting the input constraint $\ie$ $|\tau(t)|\leq \bar \tau, \forall t \in \R_0^+$.
    \begin{proof}
         We will prove the correctness of the control law in \eqref{eqn:control_law} in two stages.
         
         \textbf{Stage 1:} In this stage, we show that the velocity command $v_r(t)$ defined in \ref{eqn:r2} ensures that the system state $x(t)$ always remains within the STT. In order to show this, we proceed by method of contradiction. Suppose $\exists t_x\in \R^+$ such that $t_x$ is the first time instance when the state $x(t)$ is outside STT $\Gamma(t)$ on application of the velocity input $v_r(t)$. Then, at time $t_x$ as $e_1\rightarrow1$ the following condition should hold:
         \begin{align}
             \lim_{e_1\rightarrow1} \dot{e}_1>0.\label{eqn:cond}
         \end{align}
         Next, we compute error dynamics:
         \begin{align*}
             \dot e_1(x_1,t)=\frac{(x-\cen)^\top(\dot x-\dot c)}{\|x-\cen\|\rad}-\frac{e_1 \dot \rad}{\rad},
         \end{align*}
         and the state dynamics which can be rewritten as $\dot x(t)=v_r(t)+e_v(t)$ with $e_v(t)\leq p_v$ (we show in Stage 2). We substitute this error dynamics and state dynamics in \eqref{eqn:cond} to get following condition:
         \begin{align*}
             \lim_{e_1\rightarrow1}(x-\cen)^\top(\dot x-\dot c)>& e_1\|x-c\|\dot r\notag\\
             \implies  \lim_{e_1\rightarrow1}(x-\cen)^\top v_r>&(x-\cen)^\top\dot \cen+e_1\|x-\cen\|\dot r-(x-\cen)^\top e_v.\notag
         \end{align*}
         Then, we substitute $v_r(t)$ when $e_1 \rightarrow 1$ from \eqref{eqn:r2} in above equations:
         \begin{align*}
             \lim_{e_1\rightarrow1}-\|x-\cen\|\bar v>(x-\cen)^\top\dot \cen+e_1\|x-\cen\|\dot r-(x-\cen)^\top e_v\notag\\
             \implies \lim_{e_1\rightarrow1}\bar v\leq -\frac{(x-\cen)^\top \dot\cen}{\|x-\cen\|}-e_1\dot r+\frac{(x-\cen)^\top e_v}{\|x-\cen\|},
         \end{align*}
        and  we  apply norm inequality, boundary condition $e_1 \rightarrow 1$ and upper bound on $\norm{\dot \cen},|\dot r\|$ and $\|e_v\|<\|p_v\|$ (from Stage 2) to get the final condition which should be satisfied if any $t_x$ exists satisfying the condition in \eqref{eqn:cond}:
        \begin{align}
            \bar v<\bar \cen+ \bar \rad+\|p_v\|,
        \end{align}
         but this contradicts the feasibility condition in \eqref{eqn:v_bar_feas}. Thus, $x(t)\in \Gamma(t), \forall t \in \R_0^+$.

        \textbf{Stage 2}: Next, we show that the control input $\tau(t)$ defined in \eqref{eqn:control_law} ensures the velocity error $e_v(t)$ remains inside the funnel for all time $t\in\R_0^+$.

        Similar to Stage 1, we proceed by contradiction. Let $t_x\in \R^+$ be the first time instance when the velocity error $e_v(t)$ under the control law in \eqref{eqn:control_law}  violates the funnel constraint:
        \begin{align}
            \exists i \in [1;n],e_{v,i}(t_x)\leq-\rho_{v,i}(t_x)\text{ or }e_{v,i}(t_x)\geq\rho_{v,i}(t_x).\notag
        \end{align}
        Then, for time $t\in[0,t_x)$, the velocity error $e_v(t)$ should satisfy:
        \begin{align}
            -\rho_{v,i}<e_{v,i}<\rho_{v,i},\forall (t,i)\in[0,t_x)\times[1;n].\label{eqn:proof_1}
        \end{align}
        Next, we examine the two cases, first when the error approaches the upper funnel constraint and the second when the velocity error $e_v$ approaches the lower funnel constraint.

        \textbf{Case 1:} When there exists $i\in[1;n]$ such that $e_{v,i}(t)$ approaches the upper funnel constraint $\ie$, $e_{v,i}(t)\rightarrow\rho_{v,i}(t)$, which also implies that $e_{v,i}(t)-\rho_{v,i}(t)\rightarrow 0$. So if $\exists t_x$ then following \eqref{eqn:proof_1}, the velocity error must satisfy the following conditions:
        \begin{align}
            &(e_{v,i}(t)-\rho_{v,i}(t))\uparrow0\implies\lim_{e_{v,i}(t)-\rho_{v,i}(t))\uparrow0}\frac{d(e_{v,i}(t)-\rho_{v,i}(t))}{dt}>0\notag\\
            &\implies \lim_{e_{v,i}(t)-\rho_{v,i}(t))\uparrow0}\dot e_{v,i}(t)>\lim_{e_{v,i}(t)-\rho_{v,i}(t))\uparrow0} \dot \rho_{v,i}(t)\label{eqn:proof_2}
        \end{align}
        We substitute the derivative of funnel boundary $\dot \rho_{v,i}=-\mu_{v,i}\ex^{-\mu_{v,i}t}(p_{v,i}-q_{v,i})$ and also $\dot e_{v,i}=\dot v_i(t)-\dot v_r(t)$ in \eqref{eqn:proof_2}:
        \begin{align}
            \lim_{e_{v,i}(t)-\rho_{v,i}(t))\uparrow0} \dot v_i(t)> -\mu_{v,i}\ex^{-\mu_{v,i}t}(p_{v,i}-q_{v,i})-a_r.\label{eqn:proof_dotv}
        \end{align}
        Next, we look at the control input vector $\tau(t)=[\tau_1(t),\ldots,\tau_n(t)]^\top,\forall i\in[1;n]$:
        \begin{align}
            \lim_{e_{v,i}(t)-\rho_{v,i}(t))\uparrow0} \tau_i(t)=-\bar\tau_i,
        \end{align}
        since $\lim_{e_{v,i}(t)-\rho_{v,i}(t))\uparrow0} \varepsilon_{v,i}=1$, by construction of transformation function (Appendix \ref{a1:bounded}).

        Given the dynamics of $\dot v$ in \eqref{eq:EL_simpb}, we use the bound given by Assumption \ref{ass:el1}-\ref{ass:el3} to get the following upper bound:
        \begin{align}
             \lim_{e_{v,i}(t)-\rho_{v,i}(t))\uparrow0} \dot v_i(t)\leq \overline V_{M, i}-\underline{m}\bar \tau_i+\underline{m}_i\bar d_i.
        \end{align}
        Next, we use the feasibility condition from \eqref{eqn:tau_bar_feas} and substitute it in the above inequality to get the following conditions:
        \begin{align}
             \lim_{e_{v,i}(t)-\rho_{v,i}(t))\uparrow0} \dot v_i(t)\leq -\mu_{v,i}\ex^{-\mu_{v,i}t}(p_{v,i}-q_{v,i})-a_r
        \end{align}
        which contradicts the equation in \eqref{eqn:proof_dotv}. Hence, this shows that $e_{v,i}(t)\nrightarrow \rho_{v,i}(t),\forall (t,i)\in [0,t_x)\times [1;n],$ $\ie$, the velocity error $e_v(t)$ never approaches the upper funnel constraint over $t\in [0,t_x)$ in any discussion.

        \textbf{Case 2:} We use the same idea as in Case 1. At first we assume that there exists $i\in [1;n]$, such that $e_{v,i}$ approaches the lower funnel constraint, $\ie$, $e_{v,i}\rightarrow-\rho_{v,i}(t)$, which would implies that $e_{v,i}(t)+\rho_{v,i}(t)\rightarrow0$. Similar to Case 1, if $\exists t_x$, then following condition need to satisfied:
        \begin{align}
            \lim_{e_{v,i}(t)+\rho_{v,i}(t))\downarrow0}\frac{d(e_{v,i}(t)+\rho_{v,i}(t))}{dt}<0.
        \end{align}
        We simplify the above condition by substituting velocity error dynamics and the derivative of funnel boundary to get the following conditions:
        \begin{align}
                         \lim_{e_{v,i}(t)+\rho_{v,i}(t))\downarrow0}\dot v_i(t)<\mu_{v,i}\ex^{-\mu_{v,i}t}(p_{v,i}-q_{v,i})+a_r.\label{eqn:proof_case2}
        \end{align}
        Also when $e_{v,i}(t)+\rho_{v,i}(t))\downarrow0$ the control input in \eqref{eqn:control_law} $ \lim_{e_{v,i}(t)+\rho_{v,i}(t))\downarrow0} \tau_i(t)=\bar\tau_i$. Given the dynamics of $\dot v$ in \eqref{eq:EL_simpb}, we use the bound given by Assumption \ref{ass:el1}-\ref{ass:el3} to get the following lower bound:
      \begin{align}
             \lim_{e_{v,i}(t)+\rho_{v,i}(t))\downarrow0} \dot v_i(t)\geq \underline{V}_{M, i}+\underline{m}\bar \tau_i-\underline{m}_i\bar d_i.
        \end{align}
        Next, we use the feasibility condition from \eqref{eqn:tau_bar_feas} and substitute it in the above inequality to get the following conditions:
        \begin{align}
             \lim_{e_{v,i}(t)-\rho_{v,i}(t))\downarrow0} \dot v_i(t)\geq \mu_{v,i}\ex^{-\mu_{v,i}t}(p_{v,i}-q_{v,i})+a_r
        \end{align}
        which contradicts the equation in \eqref{eqn:proof_case2}. Hence, this shows that $e_{v,i}(t)\nrightarrow -\rho_{v,i}(t),\forall (t,i)\in [0,t_x)\times [1;n],$ $\ie$, the velocity error $e_v(t)$ never approaches the lower funnel constraint over $t\in [0,t_x)$ in any dimension. 

        Thus, it can be concluded that there is no $t_x$ at which $e_{v,i}(t)$ violates the funnel constraint in any dimensions. Therefore, control law \eqref{eqn:control_law} ensures:
        \begin{align}
             -\rho_{v,i}<e_{v,i}<\rho_{v,i},\forall (t,i)\in\R_0^+\times[1;n].\notag
        \end{align}

        Hence, the control law proposed in \eqref{eqn:control_law}, under feasibility conditions \eqref{eqn:v_bar_feas}-\eqref{eqn:tau_bar_feas}, ensures that $x(t)\in \Gamma(t),\forall t \in \R_0^+$, thus satisfying the FT-RAS task while respecting the input constraint $\ie$ $\tau(t)<\bar \tau,\forall t \in \R_0^+$.
    
    \end{proof}
\end{theorem}
\section{Case Studies}
We validate the proposed real-time STT framework through three case studies, a 2D mobile robot, a 3D quadrotor, and a satellite system supported by simulations and real-world mobile robot experiments.
\subsection{Mobile Robot}
We consider a differential drive robot operating in a 2D environment with dynamics adapted from \cite{dynamics_differential}, which selects a virtual control input on the robot that transforms the underactuated dynamics of differential drive systems into a fully actuated system. Further, we consider an input constraint of $\bar{\tau}=[13,13]^\top$N, chosen based on the specifications of the AgileX LiMO robot with mass as $4$kg. To test the robustness, we also introduce a bounded disturbance. We conduct both hardware and simulation experiments with mobile robots.

\textbf{Simulation Results for 2-D:}
In the simulation, we consider $n_o=20$ dynamic unsafe sets with randomly generated initial positions, radii, and velocities. The system is assigned a FT-RAS task, starting from the initial set $\mathbf{S}=\B([0,-0.3]^\top,0.8)$ and required to reach the target set $\mathbf{T}=\B([3,3]^\top,0.8)$ within a finite time while respecting the control input constraints $\bar \tau=[13,13]^\top $ N. 

The velocity of each unsafe set is randomly selected while remaining bounded by $v_o=0.025\,\mathrm{m/s}$. Additionally, the system is subjected to an unknown disturbance bounded by $\bar d=0.1$. The STT center parameters in in \eqref{eqn:cen_dyn}  are chosen as $k_1=0.06$, $k_2=0.03$, and $k_3=0.03$ to satisfy the condition $k_2>v_o$. Furthermore, the tube parameters are set to $\rad_a=0.5\,\mathrm{m}$, $\rad_{max}=0.45\,\mathrm{m}$, and $\rad_{min}=0.4\,\mathrm{m}$. For the second-stage controller, the time-varying decaying function parameters are selected as $\mu_v=0.5$, $p_v=0.05\e_n$, and $q_v=0.005\e_n$. Furthermore, the transformation function is chosen as given in Appendix~\ref{a1:bounded}. With these design parameters, the resulting feasibility bound on the control input in \eqref{eqn:v_bar_feas}-\eqref{eqn:tau_bar_feas}  satisfies the required feasibility condition.

 Figure~\ref{fig:plot_2d} (a)-(b) presents snapshots of the system evolution at $t=12.5\,\mathrm{s}$ and $t=56.0\,\mathrm{s}$. The third plot Figure~\ref{fig:plot_2d} (c) illustrates the evolution of the spatiotemporal tube with time along with system trajectory inside it, showing the system is always inside the tube and successfully reaches the target set at $t=187\,\mathrm{s}$. The final subplot Figure~\ref{fig:plot_2d} (d) depicts the control input profile throughout the simulation, demonstrating that the applied inputs remain well within the allowable bounds. Full video of simulation is available at \href{https://youtu.be/5QB1i58U29E}{link}.

\begin{figure}
    \centering
    \includegraphics[width=0.9\linewidth]{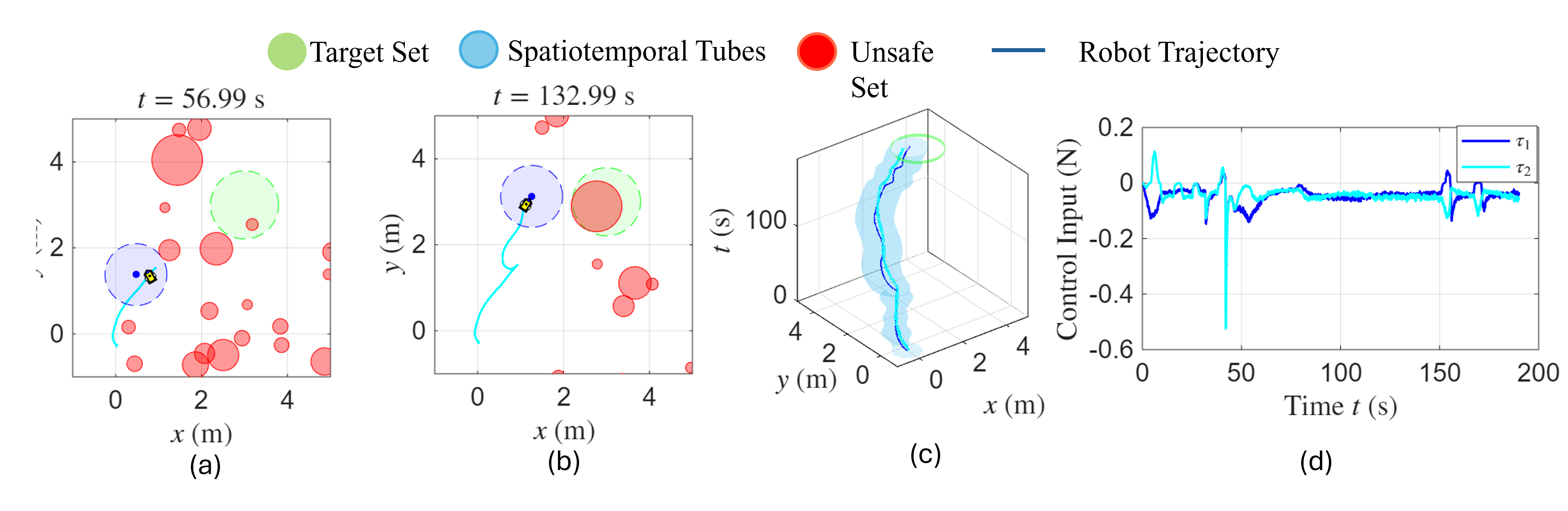}
    \caption{Simulation results for 2-D Mobile Robot, (a) and (b) present snapshots of the robot trajectory at two different time instants, (c) presents the evolution of the STT over time, showing that the system trajectory remains entirely within the STT throughout the motion, and (d) presents the control inputs over time, with the input bound given by $\bar{\tau}=[13,13]^\top$ N.}
    \label{fig:plot_2d}
\end{figure}
\textbf{Hardware Results for 2-D:} In the hardware experiments, we consider an AgileX differential-drive robot with the input constraint set to $\bar{\tau}=[13,13]^\top$. We consider the target set $\mathbf{T}=\B([4.12,2.78]^\top,0.5)$. The entire experimental setup is implemented using the Robot Operating System (ROS), which enables real-time communication between the sensing, processing, and control modules. A motion capture system provides real-time state feedback for the robot. As shown in Figure~\ref{fig:hardware_1}, a real-time projection system displays the position of the dynamic unsafe set, while the corresponding unsafe-set information is simultaneously published to a ROS topic. The mobile robot subscribes to this topic to obtain real-time information about the environment and uses it to compute safe control actions during navigation.

Figure~\ref{fig:hardware_1} presents the results of the hardware experiments. The first two sub-figures show snapshots of the experiment at two different time instants, demonstrating that the mobile robot maintains a safe distance from the dynamic unsafe set. The third sub-figure illustrates the evolution of the STT over time, together with the robot trajectory (shown in blue), which remains entirely within the STT throughout the experiment. The final sub-figure presents the control inputs over time, confirming that they always remain below the prescribed maximum allowable input. The system reaches the target set at $t=192$s, full video is available at \href{https://youtu.be/5QB1i58U29E}{link}.

\begin{figure}
    \centering
    \includegraphics[width=0.9\linewidth]{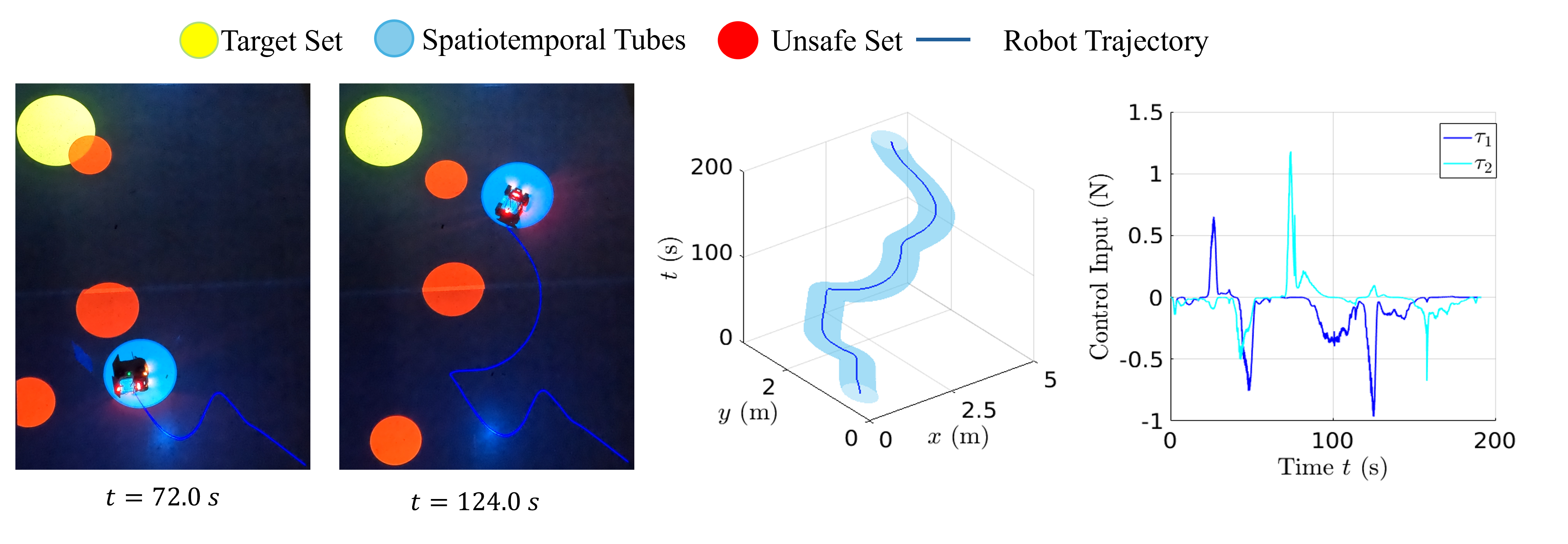}
    \caption{Hardware results for the 2D mobile robot. The first two plots depict snapshots of the hardware experiment at two distinct time instants. The third plot illustrates the evolution of the spatiotemporal tube (STT) over time, while the fourth plot shows the control input. The actuator input is constrained by $(\bar{\tau}=[10,10]^\top$ N). \href{https://youtu.be/5QB1i58U29E}{Video}
}
    \label{fig:hardware_1}
\end{figure} 
\subsection{Quadrotor}
We further evaluate the proposed framework on a quadrotor operating in a three-dimensional environment with dynamics adapted from \cite{APF_drone}. The system starts from $\mathbf{S}=\B([1,1,1]^\top,1)$ and is required to reach the target set $\mathbf{T}=\B([8,8,8]^\top,1)$ within a finite time while respecting the control input constraint $\bar{\tau}=[10,10,10]^\top~\text{N}$. The STT center parameters in \eqref{eqn:cen_dyn} are chosen as, $k_1=0.4$ and $k_2^{(j)}=k_3^{(j)}=1$ to satisfy the condition $k_2>v_o$. The acting radius is chosen as $\rad_a=0.6~\text{m}$, while the maximum and minimum allowable tube radii are set to $\rad_{\max}=0.54~\text{m}$ and $\rad_{\min}=0.42~\text{m}$, respectively. We consider $n_o=4$ dynamic unsafe sets with randomly generated radii and velocities satisfying the upper bound $v_o=0.1~\text{m/s}$.

\begin{figure}
    \centering
    \includegraphics[width=0.9\linewidth]{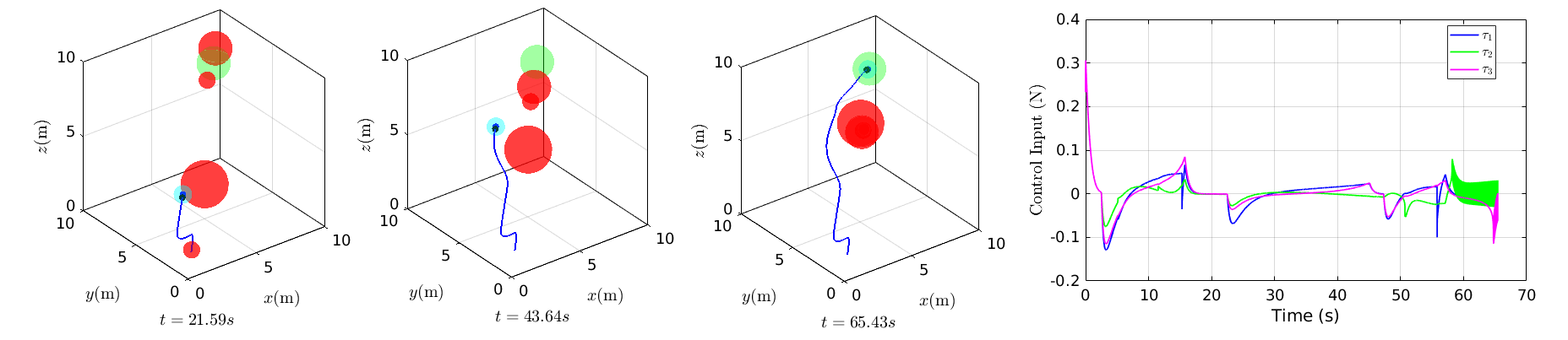}
    \caption{Simulation results for the 3-D quadrotor. The first two plots present snapshots of the STT and the system trajectory at two different time instants. The third plot presents the control inputs over time under the input constraint $\bar{\tau}=[10,10,10]^\top$ N. \href{https://youtu.be/5QB1i58U29E}{Video}}
    \label{fig:uav_3d}
\end{figure}
For the second-stage controller, the time-varying decaying function parameters are selected as $\mu_v=0.1$, $p_v=0.1\e_n$, and $q_v=0.05\e_n$. Furthermore, the transformation function is chosen as given in Appendix~\ref{a1:bounded}. With these design parameters, the resulting feasibility bound \eqref{eqn:v_bar_feas}-\eqref{eqn:tau_bar_feas} on the control input is $[9.06,9.06,9.06]^\top~\text{N}$, which is well below the available control authority. Figure~\ref{fig:uav_3d} presents snapshots of the quadrotor trajectory at two time instants, and the system reaches the target set at $t=66.9$s. The final subplot in Figure~\ref{fig:uav_3d} illustrates corresponding control effort, demonstrating that the generated control input $\tau(t)$ remains well within the prescribed input constraints while simultaneously ensuring safety throughout the maneuver.  Full video of the simulation is available at \href{https://youtu.be/5QB1i58U29E}{link}.

\subsection{Rotating Spacecraft}

As a three-dimensional benchmark, we consider the attitude reorientation of a rigid spacecraft \cite{khalil2002nonlinear} while avoiding a forbidden pointing direction. Such constraints commonly arise in spacecraft operations, where optical payloads or star trackers must not point directly toward the Sun to prevent sensor saturation or permanent damage. Consequently, the spacecraft must reorient itself to the desired attitude while avoiding a set of unsafe orientations corresponding to Sun-pointing configurations.

The rotational dynamics are governed by Euler's rigid-body equations
\begin{align}
I_x\dot{\omega}_x-(I_y-I_z)\omega_y\omega_z &= \tau_x,\\
I_y\dot{\omega}_y-(I_z-I_x)\omega_z\omega_x &= \tau_y,\\
I_z\dot{\omega}_z-(I_x-I_y)\omega_x\omega_y &= \tau_z,
\end{align}
where $\omega=[\omega_x,\omega_y,\omega_z]^\top$ denotes the body angular velocity, $\tau=[\tau_x,\tau_y,\tau_z]^\top$ is the control torque, and $I_x$, $I_y$, and $I_z$ are the principal moments of inertia. The spacecraft attitude is represented using the Euler angles $q=[\phi,\theta,\psi]^\top$, whose kinematics are given by
$\omega = T(q)\dot q$,
where $T(q)$ denotes the standard Euler-angle kinematic transformation matrix.

The spacecraft is required to reorient from the initial attitude
$q(0)=[0,\,0,\,0]^\top$
to the desired attitude
$q_d=[2.5,\,2.0,\,1.5]^\top$
while avoiding a spherical unsafe region centered at
$q_u=[1.8,\,1.5,\,1.0]^\top$
with radius $0.4$ rad, representing the set of Sun-pointing orientations. The spacecraft inertia matrix is chosen as
$J=\mathrm{diag}(1.5,\,1.2,\,1.0)$, and the system input constraint is set to $\bar{\tau}=[0.23,\,0.23,\,0.23]^\top~\mathrm{N\cdot m}$.
The controller parameters are selected as
$k_1=0.4$, $k_2=0.2$, $k_3=0.2$, with an acting radius of $\rad_a=0.6$m and a maximum tube radius of $\rad_{\max}=0.54$ rad. With these design parameters, the resulting feasibility bound on the control input in \eqref{eqn:v_bar_feas}-\eqref{eqn:tau_bar_feas}  satisfies the required feasibility condition.

Figure ~\ref{fig:spacecraft} illustrates the spacecraft attitude evolution together with the synthesized STT in the attitude space. The generated tube safely steers the spacecraft around the forbidden attitude region before converging to the desired orientation at $t=202.19$s. Throughout the maneuver, the spacecraft remains inside the STT while respecting the input constraints, thereby guaranteeing satisfaction of the prescribed FT-RAS specification.

\begin{figure}
    \centering
    \includegraphics[width=1.0\linewidth]{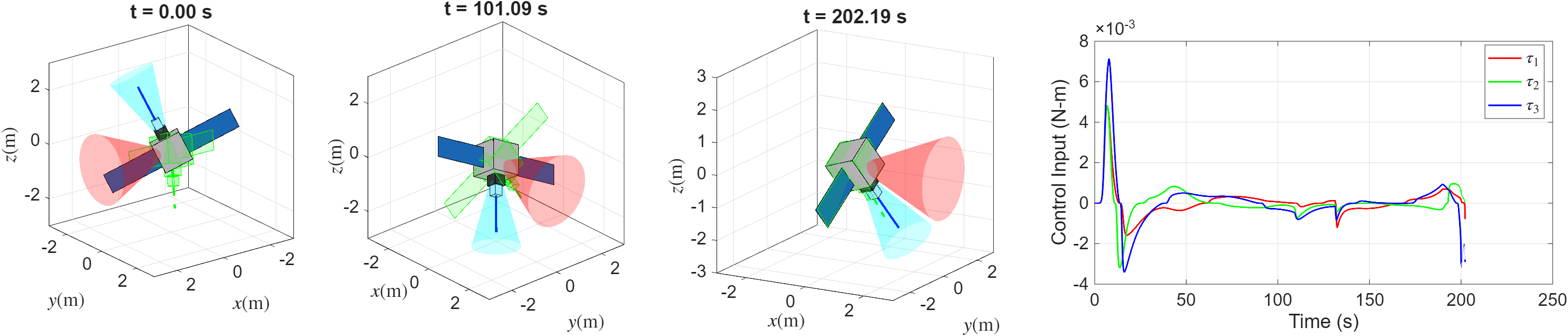}
    \caption{Simulation results for spacecraft attitude reorientation. The unsafe regions are represented by red cones, while the STT is depicted by the blue cone. The target spacecraft orientation is illustrated by the green wireframe spacecraft. The first two plots show snapshots of the spacecraft attitude at two different time instants. The third plot presents the control inputs over time under the input constraint $\bar{\tau}=[0.23,0.23,0.23]^\top$ N. \href{https://youtu.be/5QB1i58U29E}{Video}}
    \label{fig:spacecraft}
\end{figure}
\section{Comparison}
To evaluate the effectiveness of the proposed STT-based framework to solve the FT-RAS task with input constraints, we present a comparison with the existing real-time STT framework \cite{STT_Real} and also present an extensive qualitative comparison.
\subsection{Comparison with existing STT framework}

\begin{table*}[t]
\centering
\begin{adjustbox}{max width=\textwidth}
\begin{threeparttable}
\caption{Comparing proposed spatiotemporal tubes with classical algorithms}
\begin{tabular}{lcccccccccccc}
\hline
\textbf{Algorithm} & \multicolumn{2}{c}{\textbf{\begin{tabular}[c]{@{}l@{}}Closed-form \\ Control \end{tabular}}} & \multicolumn{2}{c}{\textbf{\begin{tabular}[c]{@{}l@{}}Formal \\ Guarantee \end{tabular}}} & \multicolumn{2}{c}{\textbf{\begin{tabular}[c]{@{}l@{}}Unknown \\ Dynamics \end{tabular}}} & \multicolumn{2}{c}{\textbf{\begin{tabular}[c]{@{}l@{}}Input\\ Constraint \end{tabular}}} & \multicolumn{2}{c}{\textbf{\begin{tabular}[c]{@{}l@{}}Bounded \\ Disturbance \end{tabular}}} & \multicolumn{2}{c}{\textbf{\begin{tabular}[c]{@{}l@{}}Dynamic \\ Environment \end{tabular}}} \\ \hline
ILQR based Planning \cite{ILQR} & \multicolumn{2}{c}{\xmark} & \multicolumn{2}{c}{\xmark} & \multicolumn{2}{c}{\xmark} & \multicolumn{2}{c}{\cmark} & \multicolumn{2}{c}{\xmark} & \multicolumn{2}{c}{\cmark} \\
NMPC \cite{NMPC} & \multicolumn{2}{c}{\xmark} & \multicolumn{2}{c}{\xmark} & \multicolumn{2}{c}{\xmark} & \multicolumn{2}{c}{\cmark} & \multicolumn{2}{c}{\xmark} & \multicolumn{2}{c}{\cmark} \\ 
Correct by Design CBF\cite{EL_CBF} & \multicolumn{2}{c}{\xmark} & \multicolumn{2}{c}{\cmark} & \multicolumn{2}{c}{\xmark} & \multicolumn{2}{c}{\cmark} & \multicolumn{2}{c}{\xmark} & \multicolumn{2}{c}{\cmark} \\
Neural CBF \cite{Neural_CBF} & \multicolumn{2}{c}{\xmark} & \multicolumn{2}{c}{\cmark} & \multicolumn{2}{c}{\xmark} & \multicolumn{2}{c}{\cmark} & \multicolumn{2}{c}{\xmark} & \multicolumn{2}{c}{\xmark} \\
Symbolic Control \cite{SCOTS}& \multicolumn{2}{c}{\xmark} & \multicolumn{2}{c}{\cmark} & \multicolumn{2}{c}{\xmark} & \multicolumn{2}{c}{\cmark} & \multicolumn{2}{c}{\cmark} & \multicolumn{2}{c}{\xmark} \\
 Real Time STT \cite{STT_Real} & \multicolumn{2}{c}{\cmark} & \multicolumn{2}{c}{\cmark} & \multicolumn{2}{c}{\cmark} & \multicolumn{2}{c}{\xmark} & \multicolumn{2}{c}{\cmark} & \multicolumn{2}{c}{\cmark}\\
\textbf{Proposed} & \multicolumn{2}{c}{\cmark} & \multicolumn{2}{c}{\cmark} & \multicolumn{2}{c}{\cmark} & \multicolumn{2}{c}{\cmark} & \multicolumn{2}{c}{\cmark} & \multicolumn{2}{c}{\cmark} \\
\hline
\end{tabular}
\label{tab:comp}
  
\end{threeparttable}
\end{adjustbox}
\end{table*}

In this example, we consider a 2-D environment to compare the existing real-time STT framework in \cite{STT_Real} with the proposed input-constrained STT framework. The control input bound is chosen as $\bar{\tau}=[2.5,2.5]^\top N$, considering the mass of the system to be $1kg$, which is equivalent to the Turtlebot3. For the framework in \cite{STT_Real}, the prescribed reaching time is set to $t_c=200$s, while for the proposed approach, the design parameters are selected to satisfy the feasibility conditions.

Figure~\ref{fig:comp_1}(a) illustrates the system trajectory and corresponding control inputs obtained using the real-time STT framework of \cite{STT_Real}, whereas Figure~\ref{fig:comp_1}(b) presents the results for the proposed input-constrained STT framework. In both cases, the computational burden remains low since all components of the controller have closed-form expressions.

Although the framework in \cite{STT_Real} drives the system more quickly toward the target, it fails to satisfy the input constraints. In particular, the control inputs exceed the allowable bounds during the initial phase of the task due to an unsafe set moving directly toward the system. Another violation occurs around $t=84\,\mathrm{s}$, when the target region, previously obstructed by an unsafe set, becomes partially accessible. This causes a sudden increase in the target-driven component of the STT center dynamics, resulting in a noticeable control spike. Furthermore, even with a relatively large prescribed reaching time of $t_c=200\,\mathrm{s}$, the control effort may become large as $t$ approaches $t_c$, especially when slow-moving unsafe sets remain near the target. This behavior arises from the increasingly aggressive target-driven term required to satisfy the prescribed-time constraint.

In contrast, the proposed framework guarantees satisfaction of the input constraints throughout the execution while maintaining safety and ensuring the successful completion of the task within a finite time. The resulting trajectories remain feasible at all times, demonstrating the effectiveness of the proposed input-constrained STT design. Full comparison video is available at \href{https://youtu.be/5QB1i58U29E}{link}.

Thus, although system reaches the target more quickly in Figure~\ref{fig:comp_1}(a) when no input constraints are imposed. However, in practical applications, physical systems may not be capable of generating such large control inputs, which can lead to actuator saturation and potential safety violations. In contrast, the proposed method shown in Figure~\ref{fig:comp_1}(b) results in a slower system response while ensuring that the prescribed input constraints are always satisfied, thereby maintaining safety throughout the maneuver. This highlights the inherent trade-off between rapid task completion and strict adherence to input constraints.

\begin{figure}
    \centering
    \includegraphics[width=0.7\linewidth]{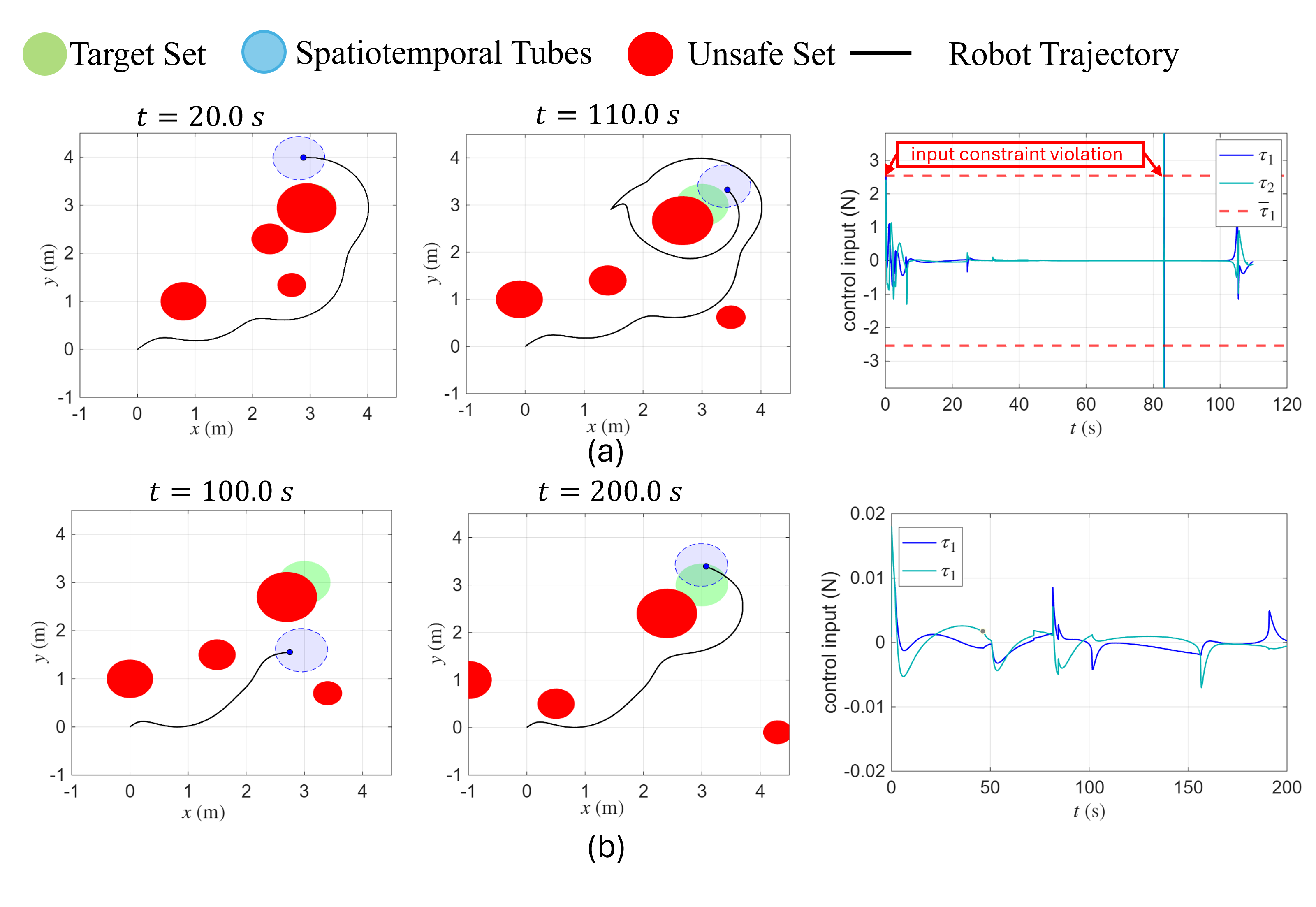}
    \caption{Comparison between (a) existing STT method \cite{STT_Real} and (b) proposed method with input constraint $\bar \tau=[2.5,2.5]^\top$ N. \href{https://youtu.be/5QB1i58U29E}{Video}} 
    \label{fig:comp_1}
\end{figure}
\subsection{Qualitative Comparison}
In this subsection, we provide a qualitative comparison of the proposed framework with existing approaches, as summarized in Table~\ref{tab:comp}. Specifically, the comparison includes ILQR-based planning \cite{ILQR}, NMPC \cite{NMPC}, Correct-by-Design CBF \cite{EL_CBF}, Neural CBF \cite{Neural_CBF}, Symbolic Control \cite{SCOTS}, and Real-Time STT \cite{STT_Real}. The methods are evaluated based on their computational efficiency, availability of formal guarantees, robustness to unknown dynamics and external disturbances, capability to accommodate input constraints, and suitability for real-time implementation.

\section{Conclusion and Future Work}
In this work, we proposed a real-time spatiotemporal tube (STT)-based framework to solve the reach-avoid problem for unknown Euler--Lagrange (EL) systems subjected to input constraints. Unlike existing approaches that often neglect input constraints or rely on accurate system models, the proposed framework is model-free, computationally efficient, and provides formal safety guarantees while explicitly accounting for actuator limitations. 
The effectiveness of the proposed approach is demonstrated through extensive simulation and hardware experiments on a 2D mobile robot, a 3D quadrotor, and a spacecraft system. 

Future work will focus on extending the proposed framework to more general classes of nonlinear systems, discrete-time systems, and stochastic systems while preserving the formal safety guarantees and input constraint satisfaction. Another important direction is to extend the framework to handle more general task specifications, such as Linear Temporal Logic (LTL), Signal Temporal Logic (STL), and Probabilistic Signal Temporal Logic (PrSTL). 


 \appendix
 \section*{Appendix}
 \section{Bounded Transformation}\label{a1:bounded}
The bounded transformation function $\Psi:\R^n\rightarrow\R^n$ is a smooth map that ensures the control inputs always remain within the limits. We defined the transformation function as  vector $\Psi(s)=[\Psi_i(s_1),\ldots,\Psi_n(s_n)]^\top$ with each of its components $\Psi_i(s_i)$ defined as follows:
\begin{align*}
    \Psi_i(s_i) =
\begin{cases}
-1, & s_i \in (-\infty, -1], \\
0,  & s_i = 0, \\
1,  & s_i \in [1, \infty),
\end{cases}
\end{align*}
and $\Psi_i(s_i)$ is non decreasing for all $s_i\in(-\infty,\infty)$. The next proposition lists some of the properties of the transformation function.
\begin{proposition}\label{prop:a1}
    The bounded transformation function $\Psi(s)$ have the following properties by its definition:
    \begin{enumerate}
        \item[i] There exists $\alpha\in \R^+$ satisfying the following conditions: $\|\frac{\partial \Psi(s)}{\partial s}\|\leq \alpha$.
        \item[ii] There exists $\beta\in \R^+$ satisfying the following conditions: $\|\frac{\partial\Psi(s)}{\partial s}s\|\leq \beta$.
        \item[iii]  There exists $\theta\in \R^+$ satisfying the following conditions: $\|\frac{\Psi(s)}{s}\|\leq \theta$.
    \end{enumerate}
\end{proposition}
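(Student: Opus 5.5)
The argument works componentwise: since $\Psi(s)=[\Psi_1(s_1),\ldots,\Psi_n(s_n)]^\top$, the Jacobian $\frac{\partial \Psi(s)}{\partial s}$ is the diagonal matrix $diag(\Psi_1'(s_1),\ldots,\Psi_n'(s_n))$. Its induced norm is therefore $\max_i |\Psi_i'(s_i)|$, and all three properties reduce to scalar statements about each $\Psi_i$.

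I will use that each $\Psi_i$ is continuously differentiable, as stated when $\Psi$ is introduced in Stage~1 of the control design. Monotonicity together with the saturation values alone would not suffice, because a merely nondecreasing map can have unbounded slope near $0$.

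For (i), $\Psi_i$ is constant ($\pm 1$) on $(-\infty,-1]$ and on $[1,\infty)$, so $\Psi_i'=0$ there. On the compact interval $[-1,1]$, $\Psi_i'$ is continuous and hence bounded. I set $\alpha_i:=\max_{s_i\in[-1,1]}|\Psi_i'(s_i)|$ and $\alpha:=\max_i \alpha_i$, taking $\alpha$ strictly positive if necessary. This gives $\|\frac{\partial \Psi(s)}{\partial s}\|\le\alpha$ for all $s$.

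For (ii), the product $\frac{\partial\Psi(s)}{\partial s}s$ has components $\Psi_i'(s_i)s_i$. These vanish for $|s_i|\ge 1$, and for $|s_i|<1$ they are bounded by $\alpha_i|s_i|\le\alpha_i$. Hence every component is bounded by $\alpha$. For the Euclidean norm of the vector one then takes $\beta:=\sqrt{n}\,\alpha$; in the scalar use made in Lemma~\ref{lemm:a_r}, $\beta=\alpha$ already suffices.

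For (iii), I interpret $\frac{\Psi(s)}{s}$ componentwise, which is also how it is used for the scalar $e_1$ in Lemma~\ref{lemm:a_r}. The value at $s_i=0$ is defined by continuous extension as $\Psi_i'(0)$. There are two regimes:
\begin{itemize}
\item If $|s_i|\ge1$, then $|\Psi_i(s_i)/s_i|=1/|s_i|\le 1$.
\item If $0<|s_i|<1$, then $\Psi_i(0)=0$ and the mean value theorem give $\Psi_i(s_i)=\Psi_i'(\xi_i)s_i$ for some $\xi_i$ between $0$ and $s_i$, so $|\Psi_i(s_i)/s_i|\le\alpha_i$.
\end{itemize}
Thus $\theta:=\max(1,\alpha)$, multiplied by $\sqrt{n}$ if a vector norm is intended, works.

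The main obstacle is not any computation. It is pinning down the hypotheses and meanings that the statement leaves implicit: the $C^1$ regularity of $\Psi_i$, which yields a finite $\alpha$ via compactness of $[-1,1]$; the meaning of $\Psi(s)/s$ for vector $s$ and at $s=0$; and which norm is meant. I would state these interpretations explicitly at the start and then carry out the three short steps above.
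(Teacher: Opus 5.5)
Your proposal is correct and complete under the $C^1$ hypothesis you invoke. The paper gives no proof; it only asserts that the three bounds hold ``by its definition.'' Your steps are precisely the argument it omits:
\begin{itemize}
\item the componentwise reduction to $\max_i|\Psi_i'(s_i)|$;
\item the compactness bound on $[-1,1]$;
\item the mean value theorem applied to $\Psi_i(s_i)/s_i$.
\end{itemize}

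Your warning about hypotheses is justified and worth keeping. The appendix lists saturation at $\pm1$, $\Psi_i(0)=0$ and monotonicity, and these alone do not suffice. The nondecreasing map equal to $\mathrm{sign}(s)|s|^{1/3}$ on $(-1,1)$ and saturated outside satisfies all of them, yet violates (i) and (iii). So the continuous differentiability stated in Stage~1 (``smooth'' in the appendix) is genuinely needed.

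One small simplification: the mean value theorem on $[0,1]$ gives $\Psi_i'(\xi)=1$ for some $\xi$, so $\alpha_i\ge 1$. Hence $\alpha>0$ holds automatically, and $\theta=\alpha$ already works.
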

An example of such a transformation function can be 
\[
\Psi_i(s_i)=
\begin{cases}
-1, & s_i \le -1, \\[6pt]
\displaystyle
2\,
\frac{
e^{-1/(s_i+1)^a}
}{
e^{-1/(s_i+1)^a}
+
e^{-1/(1-s_i)^a}
}
-1,
& -1 < s_i < 1, \\[12pt]
1, & s_i \ge 1.
\end{cases}
\]
\bibliographystyle{unsrt} 
\bibliography{sources} 

@article{tayal2026collision,
  title={A collision cone approach for control barrier functions},
  author={Tayal, Manan and Goswami, Bhavya Giri and Rajgopal, Karthik and Singh, Rajpal and Rao, M Tejas and Keshavan, Jishnu and Jagtap, Pushpak and Yadukumar, Shishir Nadubettu},
  journal={IEEE Transactions on Control Systems Technology},
  year={2026},
  publisher={IEEE}
}

@article{das2025spatiotemporal,
  title={Spatiotemporal tubes for temporal reach-avoid-stay tasks in unknown systems},
  author={Das, Ratnangshu and Basu, Ahan and Jagtap, Pushpak},
  journal={IEEE Transactions on Automatic Control},
  year={2025},
  publisher={IEEE}
}

@inproceedings{khaled2021omegathreads,
  title={Omegathreads: {S}ymbolic controller design for $\omega$-regular objectives},
  author={Khaled, Mahmoud and Zamani, Majid},
  booktitle={Proceedings of the 24th International Conference on Hybrid Systems: Computation and Control},
  pages={1--7},
  year={2021}
}

@inproceedings{sundarsingh2023scalable,
  title={Scalable distributed controller synthesis for multi-agent systems using barrier functions and symbolic control},
  author={Sundarsingh, David Smith and Bhagiya, Jay and Chatrola, Jeel and Saoud, Adnane and Jagtap, Pushpak and others},
  booktitle={62nd IEEE Conference on Decision and Control (CDC)},
  pages={6436--6441},
  year={2023}
}

@article{saoud2021compositional,
  title={Compositional abstraction-based synthesis for interconnected systems: {A}n approximate composition approach},
  author={Saoud, Adnane and Jagtap, Pushpak and Zamani, Majid and Girard, Antoine},
  journal={IEEE Transactions on Control of Network Systems},
  volume={8},
  number={2},
  pages={702--712},
  year={2021},
  publisher={IEEE}
}

@phdthesis{saoud2019compositional,
  title={Compositional and efficient controller synthesis for cyber-physical systems},
  author={Saoud, Adnane},
  year={2019},
  school={Universit{\'e} Paris Saclay (COmUE)}
}

@article{das2024prescribed,
  title={Prescribed-time reach-avoid-stay specifications for unknown systems: A spatiotemporal tubes approach},
  author={Das, Ratnangshu and Jagtap, Pushpak},
  journal={IEEE Control Systems Letters},
  year={2024},
  publisher={IEEE}
}

@book{ELbook,
  title={Euler-Lagrange systems},
  author={Ortega, Romeo and Loria, Antonio and Nicklasson, Per Johan and Sira-Ramirez, Hebertt},
  year={1998},
  publisher={Springer}
}

@book{ELbook2,
author = {Siciliano, Bruno and Sciavicco, Lorenzo and Villani, Luigi and Oriolo, Giuseppe},
title = {Robotics: Modelling, Planning and Control},
year = {2010},
isbn = {1849966346},
publisher = {Springer Publishing Company, Incorporated}
}

@article{ELbounds,
  title={Robot control by using only joint position measurements},
  author={Nicosia, S and Tomei, P},
  journal={IEEE Transactions on Automatic control},
  volume={35},
  number={9},
  pages={1058--1061},
  year={1990},
  publisher={IEEE}
}

@book{tabuada2009verification,
  title={Verification and control of hybrid systems: a symbolic approach},
  author={Tabuada, Paulo},
  year={2009},
  publisher={Springer Science \& Business Media}
}

@article{APF,
  title={Numerical potential field techniques for robot path planning},
  author={Barraquand, Jerome and Langlois, Bruno and Latombe, J-C},
  journal={IEEE transactions on systems, man, and cybernetics},
  volume={22},
  number={2},
  pages={224--241},
  year={1992},
  publisher={IEEE}
}

@ARTICLE{APF_drone,
  author={Pan, Zhenhua and Zhang, Chengxi and Xia, Yuanqing and Xiong, Hao and Shao, Xiaodong},
  journal={IEEE Transactions on Circuits and Systems II: Express Briefs}, 
  title={An Improved Artificial Potential Field Method for Path Planning and Formation Control of the Multi-{UAV} Systems}, 
  year={2022},
  volume={69},
  number={3},
  pages={1129-1133},
  doi={10.1109/TCSII.2021.3112787}}

@article{MPC,
  title={Path planning of collision avoidance for unmanned ground vehicles: {A} nonlinear model predictive control approach},
  author={Hang, Peng and Huang, Sunan and Chen, Xinbo and Tan, Kok Kiong},
  journal={Proceedings of the Institution of Mechanical Engineers, Part I: Journal of Systems and Control Engineering},
  volume={235},
  number={2},
  pages={222--236},
  year={2021},
  publisher={SAGE Publications Sage UK: London, England}
}

@inproceedings{SCOTS,
  title={{SCOTS: A} tool for the synthesis of symbolic controllers},
  author={Rungger, M. and Zamani, M.},
  booktitle={the 19th International Conference on Hybrid Systems: Computation and Control},
  pages={99--104},
  year={2016}
}

@ARTICLE{CBF,
  author={Ames, Aaron D. and Xu, Xiangru and Grizzle, Jessy W. and Tabuada, Paulo},
  journal={IEEE Transactions on Automatic Control}, 
  title={Control Barrier Function Based Quadratic Programs for Safety Critical Systems}, 
  year={2017},
  volume={62},
  number={8},
  pages={3861-3876},
  doi={10.1109/TAC.2016.2638961}}

@article{RRTs,
author = {Sertac Karaman and Emilio Frazzoli},
title ={Sampling-based algorithms for optimal motion planning},
journal = {The International Journal of Robotics Research},
volume = {30},
number = {7},
pages = {846-894},
year = {2011},
doi = {10.1177/0278364911406761},
URL = {https://doi.org/10.1177/0278364911406761},
eprint = {https://doi.org/10.1177/0278364911406761}
}

@article{PPCfeedback,
  title={A low-complexity global approximation-free control scheme with prescribed performance for unknown pure feedback systems},
  author={Bechlioulis, Charalampos P and Rovithakis, George A},
  journal={Automatica},
  volume={50},
  number={4},
  pages={1217--1226},
  year={2014},
  publisher={Elsevier}
}

@INPROCEEDINGS{hard_soft,
  author={Mehdifar, Farhad and Bechlioulis, Charalampos P. and Dimarogonas, Dimos V.},
  booktitle={IEEE 61st Conference on Decision and Control (CDC)}, 
  title={Funnel Control Under Hard and Soft Output Constraints}, 
  year={2022},
  volume={},
  number={},
  pages={4473-4478},
  doi={10.1109/CDC51059.2022.9992628}}

@book{khalil2002nonlinear,
  title={Nonlinear systems},
  author={Khalil, Hassan K and Grizzle, Jessy W},
  volume={3},
  year={2002},
  publisher={Prentice hall Upper Saddle River, NJ}
}

@article{bhat2000finite,
  title={Finite-time stability of continuous autonomous systems},
  author={Bhat, Sanjay P and Bernstein, Dennis S},
  journal={SIAM Journal on Control and optimization},
  volume={38},
  number={3},
  pages={751--766},
  year={2000},
  publisher={SIAM}
}

@article{VCZ,
  title={Scalable and Approximation-free Symbolic Control for Unknown Euler-Lagrange Systems},
  author={Das, Ratnangshu and Sawarkar, Shubham and Jagtap, Pushpak},
  journal={arXiv preprint arXiv:2509.19859},
  year={2025}
}

@ARTICLE{STT_Real,
  author={Das, Ratnangshu and Upadhyay, Siddhartha and Jagtap, Pushpak},
  journal={IEEE Robotics and Automation Letters}, 
  title={Real-Time Spatiotemporal Tubes for Dynamic Unsafe Sets}, 
  year={2026},
  volume={11},
  number={2},
  pages={2146-2153},
  doi={10.1109/LRA.2025.3645667}}

@ARTICLE{ILQR,
  author={Chen, Jianyu and Zhan, Wei and Tomizuka, Masayoshi},
  journal={IEEE Transactions on Intelligent Vehicles}, 
  title={Autonomous Driving Motion Planning With Constrained Iterative LQR}, 
  year={2019},
  volume={4},
  number={2},
  pages={244-254},
  doi={10.1109/TIV.2019.2904385}}

@INPROCEEDINGS{EL_CBF,
  author={Cortez, Wenceslao Shaw and Dimarogonas, Dimos V.},
  booktitle={2020 American Control Conference (ACC)}, 
  title={Correct-by-Design Control Barrier Functions for Euler-Lagrange Systems with Input Constraints}, 
  year={2020},
  volume={},
  number={},
  pages={950-955},
  doi={10.23919/ACC45564.2020.9147367}}

@article{NMPC,
  title={Nonlinear model predictive path following controller with obstacle avoidance},
  author={Sanchez, Ignacio and D’Jorge, Agustina and Raffo, Guilherme V and Gonzalez, Alejandro H and Ferramosca, Antonio},
  journal={Journal of Intelligent \& Robotic Systems},
  volume={102},
  number={1},
  pages={16},
  year={2021},
  publisher={Springer}
}

@inproceedings{Neural_CBF,
  title={Safe control under input limits with neural control barrier functions},
  author={Liu, Simin and Liu, Changliu and Dolan, John},
  booktitle={Conference on Robot Learning},
  pages={1970--1980},
  year={2023},
  organization={PMLR}
}

@article{Safety_critical,
  title={The safety filter: A unified view of safety-critical control in autonomous systems},
  author={Hsu, Kai-Chieh and Hu, Haimin and Fisac, Jaime F},
  journal={Annual Review of Control, Robotics, and Autonomous Systems},
  volume={7},
  year={2023},
  publisher={Annual Reviews}
}

@article{yin2024formal,
  title={Formal synthesis of controllers for safety-critical autonomous systems: Developments and challenges},
  author={Yin, Xiang and Gao, Bingzhao and Yu, Xiao},
  journal={Annual Reviews in Control},
  volume={57},
  pages={100940},
  year={2024},
  publisher={Elsevier}
}

@article{wijayathunga2023challenges,
  title={Challenges and solutions for autonomous ground robot scene understanding and navigation in unstructured outdoor environments: A review},
  author={Wijayathunga, Liyana and Rassau, Alexander and Chai, Douglas},
  journal={Applied Sciences},
  volume={13},
  number={17},
  pages={9877},
  year={2023},
  publisher={MDPI}
}

@article{simmons1994structured,
  title={Structured control for autonomous robots},
  author={Simmons, Reid G},
  journal={IEEE transactions on robotics and automation},
  volume={10},
  number={1},
  pages={34--43},
  year={1994},
  publisher={IEEE}
}

@INPROCEEDINGS{ICCBF,
  author={Agrawal, Devansh R. and Panagou, Dimitra},
  booktitle={2021 60th IEEE Conference on Decision and Control (CDC)}, 
  title={Safe Control Synthesis via Input Constrained Control Barrier Functions}, 
  year={2021},
  volume={},
  number={},
  pages={6113-6118},
  doi={10.1109/CDC45484.2021.9682938}}

@article{yang2023empc,
  title={{EMPC} with adaptive {APF} of obstacle avoidance and trajectory tracking for autonomous electric vehicles},
  author={Yang, Hongjiu and Wang, Zhengyu and Xia, Yuanqing and Zuo, Zhiqiang},
  journal={ISA transactions},
  volume={135},
  pages={438--448},
  year={2023},
  publisher={Elsevier}
}

@article{dynamics_differential,
  title={Adaptive rigidity-based formation control for multirobotic vehicles with dynamics},
  author={Cai, Xiaoyu and De Queiroz, Marcio},
  journal={IEEE Transactions on Control Systems Technology},
  volume={23},
  number={1},
  pages={389--396},
  year={2014},
  publisher={IEEE}
}

@article{barrier,
  title={3D UAV navigation with moving-obstacle avoidance using barrier Lyapunov functions},
  author={Restrepo, Esteban and Sarras, Ioannis and Loria, Antonio and Marzat, Julien},
  journal={IFAC-PapersOnLine},
  volume={52},
  number={12},
  pages={49--54},
  year={2019},
  publisher={Elsevier}
}

\end{document}